\newtheorem{theorem}{\textit{Theorem}}
\newtheorem{lemma}{\textit{Lemma}}
\newtheorem{remark}{\textit{Remark}}
\begin{document}

	\title{Inter-Satellite Link-Enhanced Transmission Scheme Towards Aviation IoT in SAGIN}
	
	\author{Qian Chen, \IEEEmembership{Member, IEEE}, Chenyu Wu, \IEEEmembership{Member, IEEE}, Shuai Han, \IEEEmembership{Senior Member,~IEEE}, \\
	Weixiao Meng, \IEEEmembership{Senior Member,~IEEE}, and Tony Q. S. Quek, \IEEEmembership{Fellow,~IEEE}
	
	\thanks{
		An earlier version of this work \cite{chen2024exploiting} has been accepted by the 2024 IEEE Global Communications Conference (GLOBECOM).  (\textit{Qian Chen and Chenyu Wu are co-first authors.)}	\textit{(Corresponding author: Weixiao Meng.)}
		
		Q. Chen,  C. Wu, S. Han and W. Meng are with the School of Electronics and Information Engineering, Harbin Institute of Technology, Harbin 150001, China (e-mail: joycecq@163.com; wuchenyu@hit.edu.cn; hanshuai@hit.edu.cn; wxmeng@hit.edu.cn).
		
		T. Q. S. Quek is with the Information Systems Technology and Design (ISTD), Singapore University of Technology and Design, Singapore 487372 (e-mail: tonyquek@sutd.edu.sg).

	}
}

	\markboth{}%
	{Shell \MakeLowercase{\textit{et al.}}: Bare Advanced Demo of IEEEtran.cls for IEEE Computer Society Journals}

	\IEEEtitleabstractindextext{
		\begin{abstract}
				The rapid development of the aviation Internet of Things (IoT) has positioned in-flight connectivity (IFC) as one of its critical applications. Space-air-ground integrated networks (SAGIN) are essential for ensuring the performance of IFC by enabling seamless and reliable connectivity. However, most existing research treats satellites merely as transparent forwarding nodes and overlooks their potential caching capabilities to enhance IFC data rates. In this article, we explore an IFC-oriented SAGIN where satellites and ground stations (GSs) work together to transmit content to airborne passengers, thereby facilitating airborne communication. By categorizing files into cached (instantly accessible via satellites) and non-cached files (available only through GSs), this article pioneers the integration of \textit{multiple inter-satellite links} (ISLs) into the IFC framework, thus innovating the content delivery process for both types of files. To minimize the average delay of content delivery, we formulate the corresponding optimization problems: 1) For cached files, we propose an exact penalty-based method to determine the satellite association scheme. 2) For non-cached files, we present an efficient algorithm based on alternating optimization to jointly optimize satellite association and GS bandwidth allocation. Our proposed framework is low in complexity, paving the way for high-speed Internet connectivity for aviation passengers. Finally, simulation results are provided to demonstrate the effectiveness of our proposed IFC framework for SAGIN.
		\end{abstract}

		\begin{IEEEkeywords}
			In-flight connectivity,  inter-satellite links, resource allocation, space-air-ground integrated networks, user association.
	\end{IEEEkeywords}}

	\maketitle

	% To allow for easy dual compilation without having to reenter the
	% abstract/keywords data, the \IEEEtitleabstractindextext text will
	% not be used in maketitle, but will appear (i.e., to be "transported")
	% here as \IEEEdisplaynontitleabstractindextext when compsoc mode
	% is not selected <OR> if conference mode is selected - because compsoc
	% conference papers position the abstract like regular (non-compsoc)
	% papers do!
	\IEEEdisplaynontitleabstractindextext
	% \IEEEdisplaynontitleabstractindextext has no effect when using
	% compsoc under a non-conference mode.

	% For peer review papers, you can put extra information on the cover
	% page as needed:
	% \ifCLASSOPTIONpeerreview
	% \begin{center} \bfseries EDICS Category: 3-BBND \end{center}
	% \fi
	%
	% For peerreview papers, this IEEEtran command inserts a page break and
	% creates the second title. It will be ignored for other modes.
	\IEEEpeerreviewmaketitle
	
	% 1.5 pages
	\section{Introduction}
	In recent years, the demand for ubiquitous and reliable internet connectivity has extended from terrestrial networks to the aviation sector \cite{8710357}. This shift has driven the development of the aviation Internet of Things (IoT), which integrates IoT technology into aerospace applications to support key goals like operational efficiency, passenger safety, in-flight experience, and predictive maintenance. Studies show that 82\% of respondents now view in-flight connectivity (IFC) as a fundamental expectation \cite{immersat_IFC}. Furthermore, introducing in-flight Wi-Fi is projected to add approximately \$4 in revenue per passenger, potentially contributing an additional \$30 billion to airline profits by 2035 \cite{oneZero_IFC}. By connecting passengers, onboard systems, and various airborne devices, aviation IoT plays a pivotal role in advancing both user-centric services and the overall effectiveness of airline operations.

	Thanks to the promising framework of the sixth-generation (6G) network, space-air-ground integrated networks (SAGIN) make reliable internet access on airplanes feasible \cite{10579820,wu2024enhancing}. When the content requested by passengers is not stored onboard, two primary methods are employed for delivery. The first method is through \textit{air-to-ground (A2G) communications}, where the aircraft connects directly to aircraft gateways. A2G communications typically offer low latency due to short transmission distances \cite{10024899}. 
	However, deploying aircraft gateways in remote areas, such as over oceans, is challenging, and their coverage is smaller than that of satellites, limiting A2G's practicality for ubiquitous IFC \cite{9511638}. 
	The second one is through \textit{air-to-space (A2S) communications}, where the aircraft connects to the nearest satellite, which relays the content from satellite ground stations (GSs). Although satellites provide extensive coverage, A2S communications experience higher transmission delays due to greater distances \cite{9565322}. To address this, popular files can be proactively stored on satellites, allowing direct delivery to airplanes via inter-satellite links (ISLs), bypassing GSs. However, current IFC schemes using A2S communications often overlooked the cache capabilities of satellites, assuming that all files are retrieved from GSs. During peak times, the burden on GSs and latency of G2S links can increase significantly. Furthermore, radio frequency (RF)-based ISLs face regulatory challenges and low data rates \cite{trends}. 
	Recently, optical wireless communication through laser ISLs has emerged as a promising solution for low-Earth orbit (LEO) satellite constellations \cite{9393372}, offering much higher speeds of 2.5 Gb/s and potentially reducing transmission latency within satellite networks.
	
	In this context, developing an efficient content delivery scheme for IFC is essential to meet the growing data rate demands of airborne internet access. This article leverages the advantages of laser ISLs and proposes a comprehensive framework for achieving IFC within SAGIN. Compared to existing methods, the proposed ISL enhanced IFC scheme can reduce content delivery latency and alleviate the communication load on feeder links.

	\subsection{Inter-Satellite Links}
	When implementing ISLs for content delivery, it is essential to account for the limited number of ISLs a satellite can establish at any given time, as well as the data volume originating from multiple links.
	Since the mega LEO constellation networks increase the content delivery complexity and the required ISL hop count, Chen \textit{et al.} proposed a theoretical model to estimate the hop count between ground users when leveraging satellite networks for data transmission \cite{9351765}.
	The authors extended their work to minimize the total ISL usage by optimizing the gateway placement and studied the relationships between ISL hop count and gateway site \cite{9794736}.

	Previously, RF-based links were used for long-distance communication in space. Thanks to the technological development of optics and laser techniques \cite{10644103}, the era of using Terahertz (THz) and laser beams for inter-satellite communication is coming, making high-speed transmission in space possible \cite{9448978}. SpaceX has indicated that each Starlink satellite can establish up to four laser ISLs \cite{Space-ISL}. These links are strategically set up to connect each satellite with two others in the same orbital plane (OP) and an additional two in different OPs. 
	With the advantages of energy efficiency and reduced hardware complexity, reconfigurable intelligent surfaces (RISs) are anticipated to play a crucial role in the integration of THz waves in small satellites \cite{9772693,10274508}.
	Zheng \textit{et al.} proposed a RIS-aided LEO satellite communication system and jointly optimized the active transmit/receive beamforming at the satellite and ground node to maximize the total channel gain \cite{9849035}.
	Tekbiyik \textit{et al.} evaluated the application of THz band in ISLs and measured how misalignment fading affects error performance \cite{9954397}. To enhance signal-to-noise ratio (SNR), they proposed to employ RIS installed on adjacent satellites to facilitate signal transmission. 
	
	Applying ISLs can also construct a decentralized network in space and thus reduce the requirement for a large number of GSs.
	Yan \textit{et al.} proposed an optimization method to achieve the optimal latency and throughput between satellites and GSs simultaneously when ensuring the ranging performance \cite{9211777}. 
	Qi \textit{et al.} focused on the ISLs between satellites in different OPs and proposed an ISL scheduling method to achieve a tradeoff between the system complexity and latency performance \cite{9475443}. Nonetheless, existing research has not adequately addressed the impact of multi-satellite cooperative transmission via laser ISLs for content delivery in SAGIN.

	\subsection{In-Flight Connectivity in SAGIN}
	The dynamic nature of satellites and aircraft makes content delivery and routing more intricate in SAGIN than terrestrial networks. In terrestrial networks, content delivery or handover is typically triggered by user movement, and direct links between small base stations and core networks are consistently available \cite{qu2024trimcaching,10286277}.
	For IFC in SAGIN, both the aircraft and the satellites are moving. Moreover, the potential link interruptions introduce uncertainties in routing \cite{10675381,9984697,10121438}.
	
	A2G offers a solution to the IFC's high propagation delay associated with satellite links \cite{3GPP_WID_ATG}. However, geographical factors often restrict gateways' placement, making A2G links discontinuously feasible during a flight.
	To tackle this problem, an intercontinental backbone (ICB) can be established by selecting aircraft that travel between specific source and destination airports on different continents \cite{1688020}. 
	However, creating such an aeronautical ad-hoc network (AANET) often requires multi-hop routing when direct A2G links are not possible. Conflicts of interest may arise among airlines when multiple aircraft are within the ICB, leading to security issues in such a distributed system \cite{wu2022echohand,liang2024ponziguard}.
	
	Leading companies in the LEO satellite sector, such as SpaceX and OneWeb, have recently incorporated IFC as one of their crucial business branches. SpaceX has announced that aviation customers using their Starlink service can expect a data speed of up to 350 Mbps, which provides all passengers with the capability for high-speed internet streaming.
	For academic studies, most existing works have assumed that the requested services are downloaded from a single satellite \cite{9322493,9361631} and designed one-to-one matching association schemes. 
	For instance, to handle the complexity of integer linear programming, an efficient online algorithm was proposed to solve the in-flight service provisioning problem in polynomial time \cite{9758064}. Another study introduced a flight control framework based on deep Q-learning to manage airplane-to-data center assignments, routing, and reconfiguration decisions during the flight period \cite{9322493}.
	These works generally considered a binary migration strategy, meaning that once the association objective is determined, all files are delivered from the same node. \cite{9361631} was the first work to consider a partial migration strategy in aircraft data access within SAGIN. The authors jointly optimized the data downloading ratio from AANET and ground stations, offering a general approach to addressing the multiple-to-one matching association issues for IFC.

	To simultaneously utilize the broad coverage of LEO satellites and the low round-trip time (RTT) of aircraft gateway, it is necessary to develop an adaptive association strategy, which integrates the network topology and quality of service (QoS) requirements. 
	Currently, the field of dual connectivity for IFC is relatively nascent. The work \cite{8715404} and \cite{8761265} begun exploring the traffic scheduling challenges in aeronautical networks within such a dual connectivity framework in SAGIN. In these works, the aircraft can obtain the services from the aircraft gateway or the satellite ground stations via the relay satellites. 
	However, these studies merely treated satellites as transparent forwarding nodes, overlooking their potential capabilities for service caching. As intelligent and flexible LEO satellite networks become prevalent, LEO satellites are increasingly capable of communicating, computing, and caching simultaneously \cite{9978924, 2023arXiv231101483L}. 
	With a direct A2S link, latency can be further reduced since the downloading latency from ground stations to satellites can be reduced. This highlights the need for evolving research and strategies in dual connectivity to fully leverage the capabilities of advanced LEO satellite networks for IFC.

	\subsection{Our Contributions}
	This paper studies the content delivery strategy in SAGIN for achieving IFC. The main contributions of this paper are as follows:
	\begin{itemize}
		\item \textbf{Propose an efficient cooperative transmission framework by exploiting laser ISLs for realizing uninterrupted IFC}. By employing the caching capabilities of satellites, we categorize the files into two types: those that have been cached at satellites (immediately reachable from satellites) and those that have not (only achievable through GSs). 
		Based on the verification that the link packet error rate (PER) falls within the reasonable range, we formulate an optimization problem for each type of files to minimize the average delay tailored to the inherent features of laser ISLs. To the best of our knowledge, \textit{this is the first paper to investigate the enhancement effects of multiple ISLs on IFC to minimize the content delivery delay within SAGIN}.
		
		\item \textbf{Design optimization algorithms to tackle the mixed integer programming problems.}
		Regarding the delay minimization problem of the cached files, we analyze the property of the optimal solution and transform the original mixed integer nonlinear programming (MINLP) problem into a more tractable form with only binary variables regarding satellite association. We then propose an exact penalty-based algorithm to deal with it with low complexity. For the more challenging non-cached case, we devise a new alternating optimization framework by efficiently grouping the association and bandwidth allocation variables.
		
		\item \textbf{Verify the effectiveness of the proposed framework.}
		Simulation results are provided to validate the effectiveness of the proposed framework as compared to several benchmarks. Moreover, we evaluate the impacts of various key parameters on the performance of the proposed algorithms. Notably, increasing the maximum number of ISLs can significantly reduce the content delivery delay, which verifies the necessity of exploiting multiple ISLs for IFC.

	\end{itemize}
	
	The rest of this paper is organized as follows. In Section \ref{sec_sm}, we introduce the network model, file model, communication model, and PER of the considered SAGIN for IFC. In Section \ref{sec:cached} and \ref{sec:noncached}, we investigate the delivery scheme of the cached files and non-cached files, respectively. Finally, simulation results and conclusions are provided in Section \ref{sec_sr} and \ref{sec_conclusion}. 
	The main notations in this paper are summarized in Table \ref{tab:notation}.

	\begin{table*}[ht]
		\caption{Summary of main notations} \label{tab:notation}
		\vspace{-10pt}
		\begin{center}
			{\footnotesize	\begin{tabular}{|c|p{11cm}|}\hline  
					\textbf{Notation} & \textbf{Definition}  \\ \hline
					$ \mathcal{A}, \mathcal{S}, \mathcal{G} $ &  The set of aircraft, satellites, and aircraft gateways or satellites GSs. \\ \hline
					$ a_i^t, s_j^t, s_k^t, g_m  $ & The aircraft with content request, the nearest satellite of aircraft, the nearest satellite's neighboring satellite, and gateway/GS. \\ \hline
					${\mathcal{L}_{\rm{G2A}}},  {\mathcal{L}_{\rm{G2S}}},  {\mathcal{L}_{\rm{S2A}}},  {\mathcal{L}_{\rm{ISL}}}$ & The set of G2A, G2S, S2A links, and ISLs. \\ \hline
					$ \mathcal{T}, t, \tau $ & The set of TSs, the index of TS, and duration of each TS. \\ \hline
					$ \mathcal{F}, f $ & The set of requested files, and the index of file. \\ \hline
					$\mathbf{X}, \bm{\rho}$ & Satellite association matrix and download ratio matrix of the \textit{cached} file schemes. \\ \hline
					$\tilde{\bf X}, {\tilde {\bm{\rho}}}, {\bm{\omega}}$ & Satellite association matrix, download ratio matrix, and bandwidth allocation matrix of the \textit{non-cached} file schemes. \\ \hline
					$R_p, b_f$ & The number of bits contained in each data packet, and the number of data packets contained in file $f$. \\ \hline
					$N_{{\rm{ISL}},\max }$ & The maximum number of ISLs which each satellite can establish within a given time slot. \\ \hline
					$ C\left( {s_k^t,s_j^t} \right), C\left( {g_m,s_k^t} \right)$ & Achievable rate of ISL  and G2S links.  \\ \hline
					$t_f^{\rm{tran}}\left( \cdot \right), t_f^{\rm{prop}}\left( \cdot \right)$ & Transmission and propagation latency of different communication links when delivering file $ f $. \\ \hline
					$ {{t_{f,{\rm{ISL}}}^{\rm{cached}}}\left( {s_j^t} \right),{t^{\rm cached}_{f,{\rm{G2S}}}}\left( {g_m,s_j^t} \right)} $ & Latency through all ISLs and G2S link during Phase 1 when delivering the \textit{cached} contents from the nearest satellite $ s_j^t $. \\ \hline
					$t_{f,{\rm{SN}}}^{{\rm{non}},{\rm{case1}}}\left( {s_j^t,a_i^t} \right), t_{f,{\rm{SN}}}^{{\rm{non}},{\rm{case2}}}\left( {s_j^t,a_i^t} \right)$ & Total latency of two cases when delivering \textit{non-cached} files from satellite networks, respectively. \\ \hline
					$ t_f^{{\rm{relay}}}\left( {{g_m},s_k^t,s_j^t} \right) $ & The two-hop delay when satellite $s_j^t$ downloads \textit{non-cached} file $f$ from GS $g_m$ via neighboring satellite $s_k^t$. \\ \hline
					${{t_{f,{\rm{S2A}}}}\left( {s_j^t,a_i^t} \right)}$  & Latency through S2A link during Phase 2. \\ \hline
					${t_f^{\rm{cached}}}\left( {a_i^t} \right), {t_f^{\rm{non-cached}}}\left( {a_i^t} \right) $ &  Total delay of aircraft $a_i^t = {\rm{sn}}_f$ when downloading files $f$ which are \textit{cached} and \textit{non-cached} at satellites.  \\ \hline
					$ {\mathbf{u}}, {\mathbf{v}}, {\bm{\pi}}, S$ & Auxiliary variables when solving $\mathcal{P}1$. \\ \hline
					$l, r$ & Iteration number when solving $\mathcal{P}4$  and $\mathcal{P}5$. \\ \hline
			\end{tabular}}
		\end{center}
	\end{table*}
	
	\section{System Model}\label{sec_sm}
	\subsection{Network Model}\label{subsec:network_model}
	We consider an IFC-oriented SAGIN as illustrated in Fig. \ref{fig:SAGIN_Architecture}, where the space and ground network cooperatively deliver content to the passengers in aircraft to realize in-flight connectivity. Considering the dynamic nature of the network topology, we adopt the finite state automation (FSA) model \cite{FSA}, where the whole schedule period is divided into a series of equal-length time slots (TSs), each with a duration of $ \tau $, indexed by the set $ {\mathcal T} = \left\{1,\ldots, T\right\} $. The network topology remains relatively stable within each TS but may change between different TSs.
	
	\begin{figure}[ht]
		\centering
		\includegraphics[width = 0.45\textwidth]{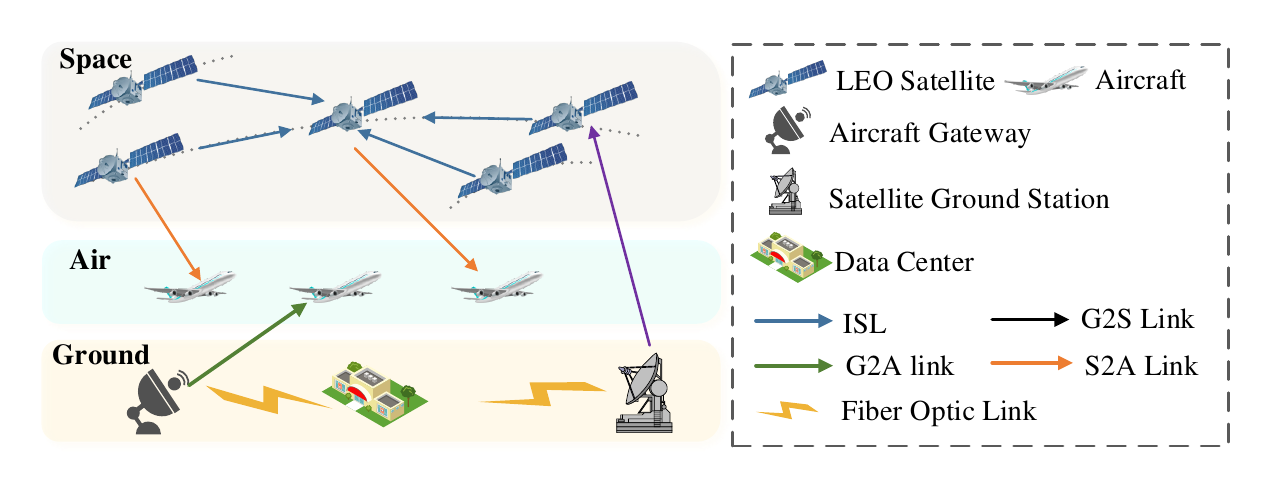}
		\caption{Framework of IFC-oriented SAGIN. \label{fig:SAGIN_Architecture}} 
	\end{figure}
	
	In our considered IFC-oriented SAGIN, $ A $ aircraft and $ S $ satellites follow their predefined flight paths or routes. To describe the movement of these non-terrestrial platforms during TS $ t \in \mathcal{T} $, the aircraft are represented as $ {\mathcal A} = \left\{a_1^t, \ldots, a_A^t \right\} $, and the satellites are represented as $ {\mathcal S} = \left\{s_1^t, \ldots, s_S^t \right\} $. 
	In addition, $G$ aircraft gateways and satellite GSs are deployed to be connected to the core network (data center) through high-speed wired links, where signals are transmitted via optical communication. As the aircraft gateways and satellite GSs are static, they are denoted by $ {\mathcal{G}} = \left\{g_1, \ldots, g_G \right\} $.  Given appropriate site selection, the transmission delay from the data center to the gateways or GSs is negligible. As such, it is reasonable to assume that the gateways and GSs have access to all the required files.
	Each aircraft can connect to at most one satellite or one aircraft gateway at a time due to the limited antenna deployed in cabin.
	Each gateway and GS has multiple antennas, but only one antenna associates with a satellite at a time \cite{2023mobi}. Thus, multiple satellites may attend for the bandwidth of the same gateway or GS.

	\subsection{File Model}
	While the aircraft is in cruise phase, airborne passengers periodically make multimedia requests, e.g., streaming videos and browsing websites. According to the storage status of the file request, all files can be divided into the following two categories: 1) \textbf{Cached contents}, such as popular movies, music, and books. They can be proactively cached in the satellite network according to the statistics of user preferences. As such, they can be shared among satellites via laser ISLs. Since satellites have limited storage capacity, these files can be downloaded from the GSs when they are not cached on satellites. Then it becomes the following non-cached case. \footnote{When the requested files have already been stored in the aircraft cabin, they can be delivered directly to the passengers. We neglect in this paper this simple case and focus on the procedure of acquiring data from external nodes of the aircraft, i.e., satellites and GSs.}  2) \textbf{Non-cached contents}, such as text messages, websites, and newsletters. These files generally can not be cached in non-terrestrial platforms since they are personally preferred or require real-time updating. In this context, the aircraft must download the requested files from the aircraft gateway or satellite GSs, while satellites act as relay nodes.
	Fig. \ref{fig:compare_sat_GS} compares the differences between downloading files from satellites and GSs regarding cache capability and transmission rate. The delivery schemes of these two categories of files will be detailed in Section \ref{sec:cached} and \ref{sec:noncached}, respectively. 
	\begin{figure}[h]
		\centering
		\includegraphics[width = 0.35\textwidth]{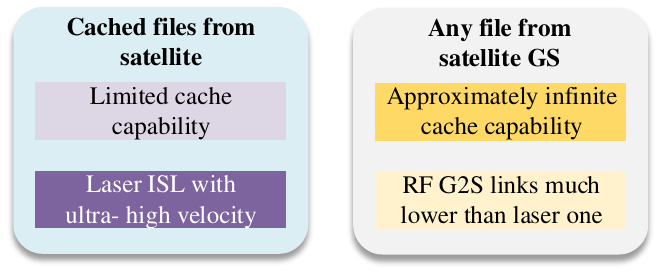}
		\caption{Comparison between file downloading from satellites and GSs. \label{fig:compare_sat_GS}}
	\end{figure}

	The requested files are represented by $ {\mathcal F} =  \left\{1,\ldots, F\right\}$. Each file $f \in {\mathcal F}$ can be described by a tuple $ \left( {\rm{sn}}_f, {\rm{gt}}_f, {\rm{tp}}_f, b_f\right)  $, where $ {\rm{sn}}_f \in \mathcal{A}$ is the source node that generates file requests, $ {\rm{gt}}_f \in \mathcal{T} $ is the generation time of the file request, and $ {\rm{tp}}_f $ contains the key information of the file (including whether it is cached and the specific type of the file, e.g., music, picture, movie, etc.), and $ b_f $ is the size in terms of the number of packets. Specifically, we follow a data-partition model, where each file can be divided into multiple packets of equal size, each containing $ R_p $ bits. This packet size represents the smallest unit of a task that can not be further divided.

	\subsection{Communication Model}\label{subsec:communication_model}
	Let $\cal L$ denote the union of all communication links. In our considered IFC-oriented SAGIN, there are four types of communication links in $\cal L$: 1) Ground-to-air (G2A) links, denoted as ${\mathcal{L}}_{\rm{G2A}}$\footnote{When a direct G2A link can be established, files can be downloaded directly from the aircraft gateways, without the need of satellite networks.}. 2) Ground-to-space (G2S) links ${\mathcal{L}}_{\rm{G2S}}$. 3) Space-to-air (S2A) links ${\mathcal{L}}_{\rm{S2A}}$. 4) ISLs ${\mathcal{L}}_{\rm{ISL}}$.  Due to the significant propagation delay of the links in space and air, for simplicity, the links in ${\mathcal{L}}_{\rm{G2A}}$, ${\mathcal{L}}_{\rm{G2S}}$, and ${\mathcal{L}}_{\rm{S2A}}$ are determined by the nearest distance method. In other words, aircraft connect to their nearest satellites when downloading files from the satellite network. Aircraft and satellites also set up links with their nearest visible gateways. Then, the establishment of ISLs and the achievable rate of each communication link in $\cal L$ are detailed as follows. 
	\subsubsection{ISL}
	Given that the laser (THz) band possesses enormous spectral resources, we assume orthogonal frequency for each ISL. Hence, let $ W_{\rm ISL} $ denote the common bandwidth of laser ISLs. When satellite $s_k^t$ transmits file $ f $ to satellite $s_j^t$ via ISL $l_f(s_k^t, s_j^t)$, the SNR at satellite $s_j^t$ is expressed as
	\begin{equation}\label{equ:SNR}
		\begin{split}
			&	\gamma \left( {s_k^t,s_j^t} \right) = \\
			&	 \frac{{{P_{\text{T}}}\left( {{s_k}} \right){G_{\text{T}}}\left( {{s_k}} \right){G_{\text{R}}}\left( {{s_j}} \right){L_{{\rm{add-ISL}}}}{L_{{\rm{PL}}}}\left( {s_k^t,s_j^t} \right){h^2}\left( {s_k^t,s_j^t} \right)}}{{{n_0}{W_{{\text{ISL}}}}}},
		\end{split}
	\end{equation}
	where $ {P_{\rm{T}}}\left( {{s_k}} \right) $ represents the transmit power of satellite $s_k$, $ {G_{\rm{T}}}\left( {{s_k}} \right)$ and $ {G_{\rm{R}}}\left( {{s_j}} \right)$ stand for the transmit and receive antenna gain at the two ends of the ISL, $ {L_{{\rm{add-ISL}}}}$ accounts for additional losses due to various factors regarding the communication environment, $ n_0 $ denotes the noise power spectral density, and ${W_{{\text{ISL}}}}$ represents the occupied bandwidth of ISL. 
	The above parameters remain constant in typical situations. However, the free-space path loss, i.e., $ {L_{{\rm{PL}}}}\left( {s_k^t,s_j^t} \right) $, varies according to the transmission distance $ d\left( {s_k^t,s_j^t} \right) $, and can be calculated as
	\begin{equation}\label{equ:FSPL}
		{L_{{\rm{PL}}}}\left( {s_k^t,s_j^t} \right) = {\left( {\frac{{\lambda \left( {{s_k},{s_j}} \right)}}{{4\pi d\left( {s_k^t,s_j^t} \right)}}} \right)^2},
	\end{equation}
	where $ \lambda \left( {{s_k},{s_j}} \right) $ is the wavelength.
	In addition, ${h^2}\left( {s_k^t,s_j^t} \right)$ denotes the small-scale fading variable, which will impact the link's PER and will be discussed in Section \ref{subsec:PER}.
	Then, the achievable rate of ISL $ l_f\left( {s_k^t,s_j^t} \right)  $ is given by
	\begin{equation}\label{equ:achievable_rate}
		C\left( {s_k^t,s_j^t} \right) = W_{\rm ISL}{\log _2}\left( {1 + {\gamma}\left( {s_k^t,s_j^t} \right)} \right).
	\end{equation}
	
	We use an SNR threshold $\gamma_{\rm{th}}$ to distinguish whether a link can perform communication. Specifically, the received SNR greater than $\gamma_{\rm{th}}$ is a prerequisite for establishing a link. Due to practical implementation, the maximum number of ISLs (denoted as $N_{{\rm{ISL}},\max}$) that can be established by each satellite during each connection attempt is generally limited\cite{9393372}. Thus, the connection relationships among satellites require adequate optimization. To this end, we introduce a \textit{satellite association} matrix ${\bf{X}} = \left\{ {\left. {{x_f}\left( {s_k^t,s_j^t} \right)} \right|f \in {\cal F},s_k^t,s_j^t \in {\cal S}} \right\}$, where ${{x_f}\left( {s_k^t,s_j^t} \right)}$ is a binary variable that represents the connectivity status of satellites. Specifically, ${x_f(s_k^t,s_j^t)} = 1$ signifies that the link $l_f(s_k^t,s_j^t) \in {\mathcal{L}}_{\rm{ISL}}$, and ${x_f(s_k^t,s_j^t)} = 0$ means the link remains idle.
	
	We also define a file \textit{download ratio} matrix $ {\bm{\rho}} = \left\{ {\left. {{\rho _f}\left( {s_k^t,s_j^t} \right),{\rho _f}\left( {g_m,s_j^t} \right)} \right|f \in {\cal F},s_j^t,s_k^t \in {\cal S},g_m \in {\cal G}} \right\} $, which represents the percentage of file $ f $ that is acquired by satellite $ s_j^t $ from other nodes (other satellites or the visible GS). Both ${\rho _f}\left( {s_k^t,s_j^t} \right)$ and ${\rho _f}\left( {g_m,s_j^t} \right)$ fall within the range of $ \left[0,1\right]$. When ${\rho _f}\left( {s_k^t,s_j^t} \right) $ or ${\rho _f}\left( {g_m,s_j^t} \right) $ equals 0 or 1, the partial downloading scenario simplifies to the binary downloading case, where all the requested contents are downloaded from a single node.
	Subsequently, the transmission delay of the file through ISL $ l_f\left( {s_k^t,s_j^t} \right) $  can be calculated as
	\begin{equation}\label{equ:delay}
		{t_f^{\rm{tran}}}\left( {s_k^t,s_j^t} \right) = \frac{{{\rho _f}\left( {s_k^t,s_j^t} \right){b_f}{R_p}}}{{{C}\left( {s_k^t,s_j^t} \right)}}.
	\end{equation}

	\subsubsection{G2S, G2A, S2A Links} We use the G2S link as an example to discuss the achievable rate, while a similar approach can be applied to determine the achievable rate for G2A and S2A links. Since the bandwidth of GSs is relatively limited compared with the laser frequency bands, we introduce a \textit{bandwidth allocation} variable ${\bm \omega}  = \left\{ {\left. {\omega \left( {{g_m},s_k^t} \right)} \right|  g_m \in {\mathcal{G}}, s_k^t \in {\mathcal{S}}} \right\} $, with $\omega\left( {{g_m},s_k^t} \right) \in \left[0,1\right]$ representing the proportion of the bandwidth occupied by satellite $s_k^t$. Therefore,
	the achievable rate of satellite $s_k^t$ when receiving file from GS $g_m$ is given by
	\begin{equation}\label{rate_g2s}
		{C}\left( {{g_m},s_k^t} \right) = \omega \left( {{g_m},s_k^t} \right)W_{\rm G2S}{\log _2}\left( {1 + \frac{{c\left( {{g_m},s_k^t} \right)}}{{\omega \left( {{g_m},s_k^t} \right)}}} \right)  ,
	\end{equation}
	where $c(g_m,s_k^t)= \frac{{{P_{\rm{T}}}\left( {{g_m}} \right){G_{\rm{T}}}\left( {{g_m}} \right){G_{\rm{R}}}\left( {{s_k^t}} \right){L_{{\rm{add-G2S}}}}{L_{{\rm{FSPL}}}}\left( {g_m,s_k^t} \right)}}{{{n_0}}W_{\rm G2S}}$ is a constant within each TS.
	
	\subsection{Packet Error Rate}\label{subsec:PER}
	Shadowed-Rician (SR) fading is a widely used distribution to characterize the small-scale fading of G2S links. Let ${\rm{SR}} \left(b_0, m, \Omega\right)  $ denote the SR fading. Here, $ \Omega $ is the average power of line-of-sight (LoS) component, $ 2b_0 $  is the average power of the multi-path component except the LoS one, and $ m $ is the Nakagami parameter.
	Then, based on the probability density function (PDF)  of SR fading power gain $ h^2 $ \cite{1198102}, involved in (\ref{equ:SNR}),  the PDF of received SNR $ \gamma $ under SR fading can be derived by
	\begin{equation}\label{key}
		\begin{split}
			p_{\rm{SR}}\left( \gamma  \right) & = {\left( {\frac{{2{b_0}m}}{{2{b_0}m + \Omega }}} \right)^m}\frac{1}{{2{b_0}\bar \gamma }}\exp \left( { - \frac{\gamma }{{2{b_0}\bar \gamma }}} \right)\\
			& \cdot {}_1{F_1}\left( {m,1,\frac{{\Omega \gamma }}{{2{b_0}\left( {2{b_0}m + \Omega } \right)\bar \gamma }}} \right),
		\end{split}
	\end{equation}
	where $ {}_1{F_1}\left(\cdot,\cdot,\cdot\right) $ is the confluent hypergeometric function.
	Here, $\bar \gamma$ is the average SNR and $\gamma = \bar \gamma h^2$ holds.
	For S2A links, Loo distribution is often used to model the small-scale fading \cite{54983}, denoted by ${\rm{Loo}}\left(\mu, d_0, b_0\right)$. Here, $\mu $ and $\sqrt{d_0}$ are the mean value and standard derivation due to shadowing, respectively.
	With the similar method, the PDF of received SNR $\gamma$ under Loo distribution can be expressed as
	\begin{equation}
		\begin{split}
			p_{\rm{Loo}}(\gamma )  & = \frac{\gamma}{b_0 {\bar \gamma}^2 \sqrt{2 \pi d_0}}  \int_0^{\infty} \frac{1}{z} \\
			& \cdot \exp \left(-\frac{(\ln z-\mu)^2}{2 d_0}-\frac{\left(\frac{\gamma^2}{\bar \gamma^2 } +z^2\right)}{2 b_0}\right) I_0\left(\frac{\gamma z}{{ \bar \gamma }  b_0} \right) {\rm{d}} z,
		\end{split}
	\end{equation}
	where $I_0\left(\cdot\right)$ denotes the modified Bessel function of zero order.
	
	Let $ f\left(\gamma \right) $ be the PER function for the AWGN channel with instantaneous SNR $ \gamma $. Assume that $ \int_0^{ + \infty } { f\left( \gamma  \right)} {\rm{d}}\gamma $ exists, and define $ c_0  = \int_0^{ + \infty } { f\left( \gamma  \right)}{\rm{d}}\gamma  $.
	The average PER $\bar \gamma$ over a SR fading channel with $ {\mathcal{SR}}\left(b_0, m, \Omega \right)  $ is upper bounded by
	\begin{equation}
		\begin{split}
			&	{\rm{PER}}_{\rm{SR}}\left( {\bar \gamma } \right) = \int_0^{ + \infty } {f\left( \gamma  \right){p_{\rm{SR}}}\left( \gamma  \right)} {\rm{d}}\gamma \\
			&	= {\left( {\frac{{2{b_0}m}}{{2{b_0}m + \Omega }}} \right)^m}\frac{1}{{2{b_0}\bar \gamma }}\int_0^{ + \infty } {\exp \left( { - \frac{\gamma }{{2{b_0}\bar \gamma }}} \right)} \\
			&	\cdot {}_1{F_1}\left( {m,1,\frac{{\Omega \gamma }}{{2{b_0}\left( {2{b_0}m + \Omega } \right)\bar \gamma }}} \right)f\left( \gamma  \right){\rm{d}}\gamma \\
			&	\leq {\left( {\frac{{2{b_0}m}}{{2{b_0}m + \Omega }}} \right)^m}\frac{1}{{2{b_0}\bar \gamma }}\int_0^{c_0 } {\exp \left( { - \frac{\gamma }{{2{b_0}\bar \gamma }}} \right)} \\
			&	\cdot {}_1{F_1}\left( {m,1,\frac{{\Omega \gamma }}{{2{b_0}\left( {2{b_0}m + \Omega } \right)\bar \gamma }}} \right){\rm{d}}\gamma, \\
		\end{split}
	\end{equation}
	where the inequality follows from the Lemma of integral inequalities in \cite{5703199}.
	Specifically, the PER function $ f\left(\gamma\right) $ of an uncoded packet system is given by $ f\left(\gamma\right) = 1-\left(1-b\left(\gamma\right)\right)^n $, where $ b\left(\gamma\right) $ is the BER for AWGN channel and $ n $ denotes the packet length in bits. Similarly, the upper bound of average PER over a Loo distribution fading is given by
	\begin{equation}
		\begin{split}
			&	{\rm{PER}}_{\rm{Loo}}\left( {\bar \gamma } \right) 	\leq \frac{1}{{{b_0}{{\bar \gamma }^2}\sqrt {2\pi {d_0}} }}\int_0^{{c_0}} {\gamma \int_0^\infty  {\frac{1}{z}} } \\
			&	\cdot \exp \left( { - \frac{{{{(\ln z - \mu )}^2}}}{{2{d_0}}} - \frac{{\left( {\frac{{{\gamma ^2}}}{{{{\bar \gamma }^2}}} + {z^2}} \right)}}{{2{b_0}}}} \right){I_0}\left( {\frac{{\gamma z}}{{\bar \gamma {b_0}}}} \right){\rm{d}}z{\rm{d}}\gamma , \\
		\end{split}
	\end{equation}
	
	Fig. \ref{fig:SR_PER} plots the PER versus average SNR within certain range under three scenarios. That is, infrequent light shadowing (ILS), frequent heavy shadowing (FHS), and average shadowing (AS). The $ {\rm{SR}}\left(b_0, m, \Omega\right) $ and ${\rm{Loo}}\left(\mu, d_0, b_0\right)$ are set according to \cite{1198102} and packet length is 1023 bits. We can find that for the most common scenario (like ILS and AS), when the average SNR of SR fading channel is over 16 dB, PER can drop below 1\%, while the SNR threshold of Loo distribution is about 7 dB.
	This complies with the 3GPP requirements for streaming services \cite{3gpp2022}.
	The SNR of each type of link in IFC-oriented SAGIN can easily reach the above SNR threshold such that the packet loss can be neglected in the following.

	\begin{figure}[t]
		\centering
		\includegraphics[width = 0.4\textwidth]{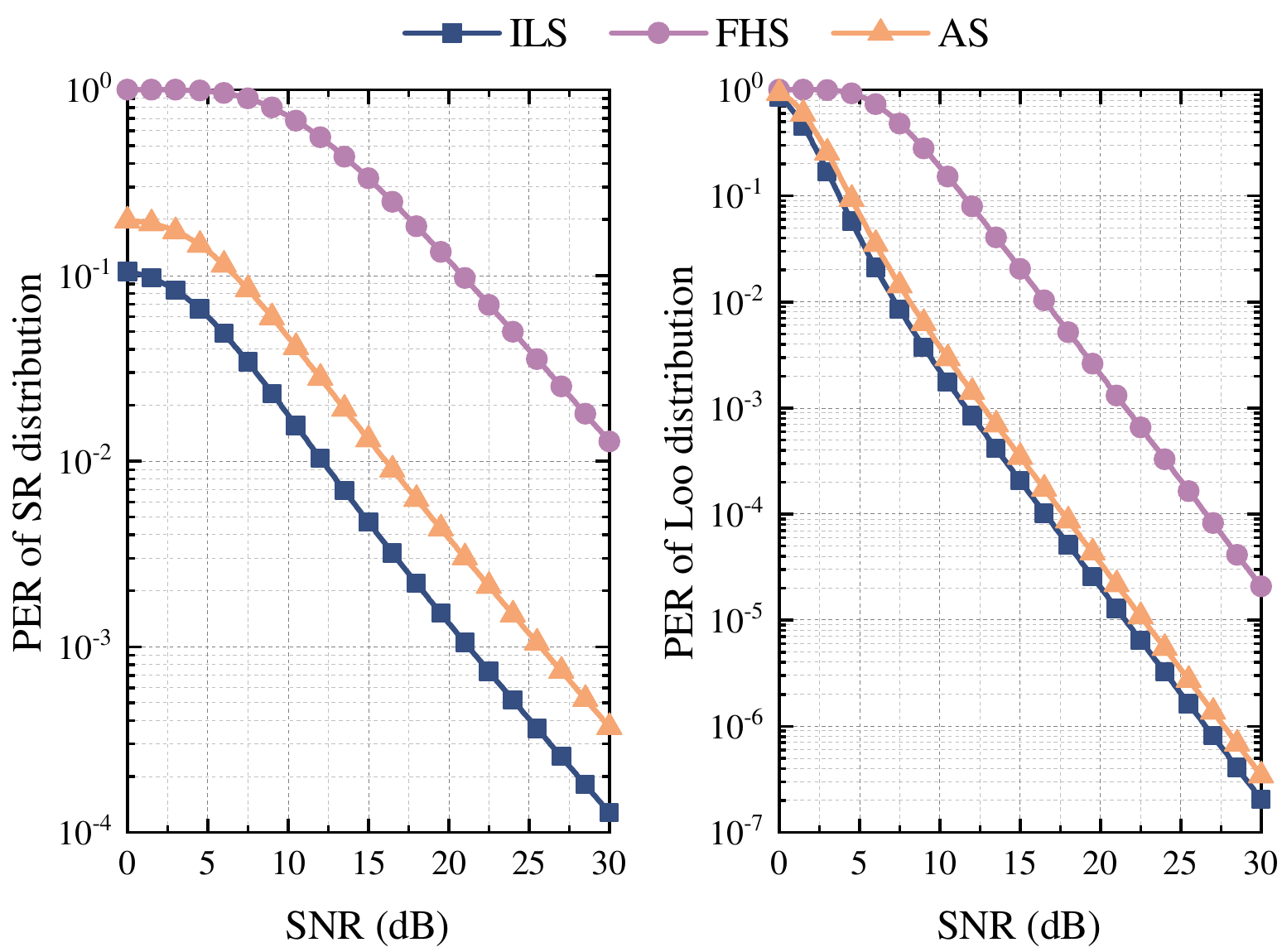}
		\caption{PER vs average SNR under SR (left) and Loo (right) distributions.}\label{fig:SR_PER}  
	\end{figure}

	\section{Delivery Scheme of the Cached Files}\label{sec:cached}
	In this section, we first introduce the cooperative transmission process of the cached files. Then, we formulate a delay minimization problem and propose an exact penalty-based algorithm to solve it efficiently.

	\subsection{Delivery Process of the Cached Files}
	Fig. \ref{fig:parallel_delay_cache} illustrates the cooperative transmission process of the cached files when aircraft connects to the nearest satellite. 
	After the aircraft $ a_i^t \in {\mathcal{A}} $ broadcasts its content request, the aircraft will be aware of which nodes have cached the corresponding files. When $ a_i^t$ can establish a link with the gateway $ g_m \in {\mathcal{G}} $, the requested content $ f $ can be directly transmitted via the G2A link $l_f \left(g_m,a_i^t\right) \in {\mathcal{L}}_{\rm{G2A}}$.
	However, if aircraft $a_i^t$ is unable to establish any G2A links (i.e.  $\gamma_f \left( {g_m,a_i^t} \right) < {\gamma _{{\rm{th}}}},\forall g_m \in {\mathcal G} $), it will instead connect to the nearest satellite $s_j^t \in {\mathcal S}$. Generally, satellite $s_j^t$ only possesses a portion of the required content. The non-cached files must be acquired from other nodes capable of establishing links with satellite $s_j^t$, termed as \textit{Phase 1}. This situation, as shown in Fig. \ref{fig:cache_case}, can be categorized into two scenarios:
	\begin{enumerate}[label=(\arabic*)]
		\item When satellite $s_j^t$ cannot connect to any GS, it establishes connections with neighboring satellites $s_k^t$ to obtain files through ISLs denoted as $l_f\left(s_k^t,s_j^t\right) \in {\mathcal{L}}_{\rm{ISL}}$.

		\item Satellite $s_j^t$ simultaneously downloads files from adjacent satellites $s_k^t$ via ISLs $l_f\left(s_k^t,s_j^t\right)$ and from the nearest satellite GS $g_m$ through the G2S link $l_f\left(g_m,s_j^t\right)$. 
	\end{enumerate}
	Here, the first case can be regarded as the special case of the second one.
	After $s_j^t$ obtains the whole required file, it will aggregate the content and send it to $a_i^t$, termed as \textit{Phase 2}.
	Particularly, when satellite $s_j^t$ already possesses all the content requested by aircraft $ a_i^t $, the file can be directly transmitted using the S2A link $l_f\left(s_j^t,a_i^t\right) \in {\mathcal{L}}_{\rm{S2A}}$, like the way that the aircraft downloads file from the nearest aircraft gateway. In this case, transmission occurs exclusively during Phase 2 without Phase 1, as shown in Fig. \ref{fig:parallel_delay_cache}.

	\begin{figure}[t]
		\centering
		\includegraphics[width = 0.3\textwidth]{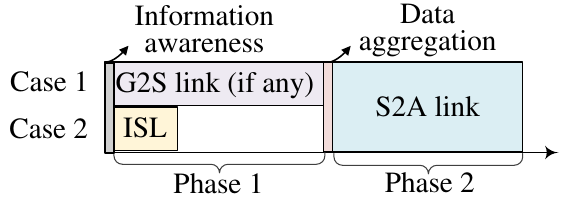}
		\caption{Timeline of cooperative transmission of cached files when aircraft connects to the nearest satellite. \label{fig:parallel_delay_cache}}
	\end{figure}

	\begin{figure}[t]
		\centering
		\includegraphics[width = 0.28\textwidth]{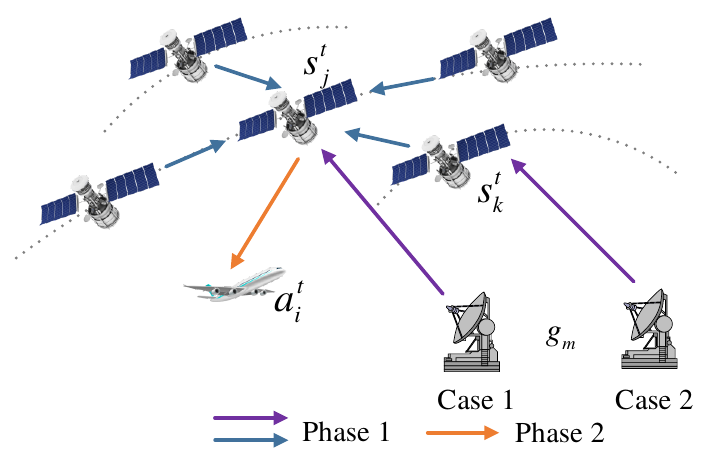}
		\caption{Delivery process of the cached files when aircraft download the files from satellite networks. \label{fig:cache_case}}
	\end{figure}
	
	The stages of information awareness and data aggregation are of negligible duration. Subsequently, we will discuss the delay associated with Phase 1 and 2 when the aircraft leverage the satellite network for content delivery.
	Under parallel transmission, the delay through all ISLs during Phase 1 is given by
	\begin{equation}\label{equ:delay_ISL}
		\begin{split}
			& {t_{f,{\rm{ISL}}}^{\rm{cached}}}\left( {s_j^t} \right) \\
			& = \mathop {\max }\limits_{s_k^t} \left\{ {{x_f}\left( {s_k^t,s_j^t} \right) \cdot \left( {t_f^{{\rm{tran}}}\left( {s_k^t,s_j^t} \right) + t_f^{{\rm{prop}}}\left( {s_k^t,s_j^t} \right)} \right)} \right\},\\
			& \qquad \qquad  \qquad \qquad \qquad \qquad \qquad \qquad  \qquad \qquad \forall s_j^t \in {\cal S}.
		\end{split}
	\end{equation}
	Here, $ {t_f^{\rm{prop}}}\left( {s_k^t,s_j^t} \right) = \frac{{{d_f}\left( {s_k^t,s_j^t} \right)}}{c} $ represents the propagation delay, where $ c $ is the light velocity.\footnote{It is reasonable to neglect the propagation delay since it is small (i.e., a few milliseconds) compared with the transmission delay, which is of several seconds.}
	The transmission and propagation delays of the G2S links can be calculated similarly to those of the ISLs, which are omitted here. Therefore, when aircraft $a_i^t$ connects to satellite $s_j^t$ to retrieve cached file $f$, the total delay is given by
	\begin{equation}\label{equ:delay_SN_cache}
		\begin{split}
			& {t_{f,{\rm{SN}}}^{\rm{cached}}}\left( {s_j^t,a_i^t} \right) \\
			& = \underbrace {\max \left\{ {{t_{f,{\rm{ISL}}}^{\rm{cached}}}\left( {s_j^t} \right),{t_{f,{\rm{G2S}}}}\left( {g_m,s_j^t} \right)} \right\}}_{{\rm{Phase\,1}}} + \underbrace {{t_{f,{\rm{S2A}}}}\left( {s_j^t,a_i^t} \right)}_{{\rm{Phase\,2}}}, \\
			& \qquad \qquad  \forall s_j^t,g_m:l_f\left( {s_j^t,a_i^t} \right),l_f\left( {g_m,s_j^t} \right) \in {\mathcal L},
		\end{split}
	\end{equation}
	where $ {t_{f,{\rm{G2S}}}}\left( {g_m,s_j^t} \right) = t_f^{{\rm{tran}}}\left( {g_m,s_j^t} \right) + t_f^{{\rm{prop}}}\left( {g_m,s_j^t} \right) $ and $ {t_{f,{\rm{S2A}}}}\left( {s_j^t,a_i^t} \right) = t_f^{{\rm{tran}}}\left( {s_j^t,a_i^t} \right) + t_f^{{\rm{prop}}}\left( {s_j^t,a_i^t} \right) $ represent the sum (transmission and propagation) delay by G2S link and S2A link, respectively.
	Particularly, when satellite $s_j^t$ cannot set up G2S links with any GSs, $C\left( {g_m,s_j^t} \right) = 0$.
	Hence, the total delay of downloading file $f$ by aircraft $a_i^t = {\rm{sn}}_f$ is given by
	\begin{equation}\label{equ:delay_f}
		\resizebox{1\hsize}{!}{$	\begin{split}
				& {t_f^{\rm{cached}}}\left( {a_i^t} \right) =\\
				& \left\{ \begin{array}{l}
					{t_f}\left( {g_m,a_i^t} \right),{\rm{if }} \ l_f\left( {g_m,a_i^t} \right) \in {{\mathcal L}_{{\rm{G2A}}}},\\
					{t_{f,{\rm{SN}}}^{\rm{cached}}}\left( {s_j^t,a_i^t} \right), {\rm{if }} \ l_f\left( {g_m,a_i^t} \right) \notin {{\mathcal L}_{{\rm{G2A}}}} \ {\rm{and}} \ l_f\left( {s_j^t,a_i^t} \right) \in {{\mathcal L}_{{\rm{S2A}}}}.\\
				\end{array} \right.
			\end{split} $}
	\end{equation}
	
	%We denote $ {m_f}\left( {s_j^t} \right) $ as the number of packets of file $ f $ in the data queue at satellite $ s_j^t $. Following the principle of flow conservation, for each satellite with a finite task queue per TS, we have
	%\begin{equation}\label{equ:queue_flow}
	%	\begin{split}
		%			{m_f}\left( {s_j^{t + 1}} \right)  = {m_f}\left( {s_j^t} \right) & - \left( {\sum\limits_{s_k^t:l_f\left( {s_j^t,s_k^t} \right) \in {{\mathcal L}}} {{\rho _f}\left( {s_j^t,s_k^t} \right)}  + {\rho _f}\left( {s_j^t,a_i^t} \right)} \right){b_f} \\
		%			& + \left( {\sum\limits_{s_{k'}^t:l_f\left( {s_{k'}^t,s_j^t} \right) \in {{\mathcal L}}} {{\rho _f}\left( {s_{k'}^t,s_j^t} \right)}  + {\rho _f}\left( {g_k,s_j^t} \right)} \right){b_f}, \; \forall {f \in {\cal F},s_j^t \in {\cal S}}.
		%	\end{split}
	%\end{equation}
	%Eq. (\ref{equ:queue_flow}) accounts for the change in the number of packets of file $f$ in the data queue at satellite $s_j^t$ between TSs $t$ and $t+1$, considering the incoming and outgoing packets due to the content transmission over the various links.
	
	\subsection{Problem Formulation}
	This paper aims to minimize the overall downloading delay for the airborne Internet access issue by jointly optimizing the satellite association $\{\bf X\}$ and the file download ratio $\{\bm \rho\}$.
	%Since the case where an aircraft connects directly to its gateway is straightforward, our primary focus is on scenarios involving content retrieval from external networks.
	Note that minimizing the total delay $\{{t_f^{\rm cached}}\left( {a_i^t} \right)\}$ is equivalent to minimizing the sum delay of Phase 1 since all other parts are constant. 
	Hence, this optimization problem can be formulated as
	\begin{subequations}
		\begin{equation}\label{p1}
			{\mathcal{P}}1:	\mathop {\min }\limits_{{\bf{X}},{\bm{\rho} }} \ {\sum\limits_f {\max \left\{ {{t_{f,{\rm{ISL}}}^{\rm{cached}}}\left( {s_j^t} \right),{t_{f,{\rm{G2S}}}}\left( {g_m,s_j^t} \right)} \right\}} } 
		\end{equation}
		\begin{equation}\label{p1_maxISL}
			{\rm{s.t.}}  \quad   \sum\limits_{s_k^t \in {{\mathcal S}}} {{x_f}\left( {s_k^t,s_j^t} \right)}  \le {N_{{\rm{ISL}},\max }}, \; \forall s_j^t\in\mathcal{S},
		\end{equation}
		\begin{equation} \label{p1_rho}
			{\rho _f}\left( {s_k^t,s_j^t} \right) \le {x_f}\left( {s_k^t,s_j^t} \right), \; \forall s_j^t, s_k^t  \in \mathcal{S},
		\end{equation}
		%	\begin{equation}
			% {\rho _f}\left( {s_k^t,s_j^t} \right){b_f}{R_p} \le {C_f}\left( {s_k^t,s_j^t} \right) \cdot \tau, \; \forall f \in {\mathcal F},s_j^t, s_k^t \in {\mathcal S},
			%	\end{equation}
		\begin{equation}\label{p1_sumrho}
			\sum\limits_{s_k^t \in {{\mathcal S}}} {{\rho _f}\left( {s_k^t,s_j^t} \right)}  + {\rho _f}\left( {g_m,s_j^t} \right)  = 1, \; \forall s_j^t\in\mathcal{S},\forall g_m\in\mathcal{G},
		\end{equation}
		\begin{equation}\label{cst:x_binary}
			{x_f}\left( {s_k^t,s_j^t} \right) \in \left\{ {0,1} \right\},\forall s_j^t, s_k^t  \in\mathcal{S},
		\end{equation}
		\begin{equation}\label{cst:rho_continuous}
			{\rho_f}\left( {s_k^t,s_j^t} \right) \in \left[ {0,1} \right],\forall s_j^t, s_k^t  \in \mathcal{S},
		\end{equation}
	\end{subequations}
	where \eqref{p1_maxISL} ensures that the total number of established ISLs for each satellite within a given time slot does not exceed the maximum allowable value $ N_{{\rm{ISL}},\max } $, 
	\eqref{p1_rho} means that the downloading ratios from adjacent satellites are nonzero only when the ISLs are set up, and \eqref{p1_sumrho} dictates that the sum of the download ratios of data obtained from adjacent satellites and its GS equals 1. In addition, \eqref{cst:x_binary} and \eqref{cst:rho_continuous} limit the upper and lower bounds of $\bf X$ and $\bm \rho$.
	
	\begin{remark}
		Thanks to the considerable capacity provided by the laser ISLs, each file request can be readily fulfilled within the considered time duration $\tau$. Hence, it is reasonable to assume that the content delivery is independent for each time slot. For simplicity, we focus on a typical time slot and omit the superscript $t$ in $ {x_f}\left( {s_k^t,s_j^t} \right)$ and $ s_k^t $ herein-below. 	
	\end{remark}
	
	\subsection{Proposed Exact Penalty Method}
	
	$\mathcal{P}1$ is a mixed integer nonlinear programming (MINLP) problem, which is non-trivial to solve directly. To facilitate solving it, we first provide the following lemma to show the properties of the optimal solution of $\mathcal{P}1$.
	\begin{lemma}\label{lemma1}
		The optimal solution of $\mathcal{P}1$ should satisfy the following conditions:
		\begin{itemize}
			\item For each requested file, the transmission delay through each established ISL (and G2S, if visible) should be equal.
			\item For each satellite that aggregates the requested file (i.e., $s_j$), the number of ISLs it established equals to  ${N_{{\rm{ISL}},\max}}$ or the maximum number of visible satellites that have cached the file.
		\end{itemize}
	\end{lemma}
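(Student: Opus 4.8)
The plan is to use two structural facts: the objective \eqref{p1} separates over the requested files, and each per-file subproblem has a closed-form water-filling solution in the download ratios once the set of established links is fixed. First I would note that constraints \eqref{p1_maxISL}--\eqref{cst:rho_continuous} couple only the variables $x_f(s_k,s_j)$, $\rho_f(s_k,s_j)$, $\rho_f(g_m,s_j)$ that feed one aggregating satellite $s_j$ for one file $f$, while the summand indexed by $f$ in \eqref{p1} depends only on those variables; hence $\mathcal{P}1$ decomposes into independent per-file problems. Fixing a file $f$, its nearest satellite $s_j$, and (if it exists) the visible GS $g_m$, and invoking the paper's stated simplification that the propagation terms $t_f^{\mathrm{prop}}(\cdot)$ are negligible against the transmission terms, the per-file subproblem is to choose a subset $\mathcal{N}$ of the candidate ISLs $\{l_f(s_k,s_j): s_k\text{ visible from }s_j,\ f\text{ cached at }s_k\}$ with $|\mathcal{N}|\le N_{\mathrm{ISL},\max}$, together with nonnegative ratios on $\mathcal{N}$ and on the G2S link (if visible) summing to one, so as to minimize the maximum of $\rho_f(s_k,s_j)b_fR_p/C(s_k,s_j)$ over $s_k$ with $l_f(s_k,s_j)\in\mathcal{N}$ and of $\rho_f(g_m,s_j)b_fR_p/C(g_m,s_j)$.

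For the inner problem, with the link set held fixed, put $a_\ell=b_fR_p/C_\ell>0$ for each link $\ell$ in use. Then $1=\sum_\ell \rho_\ell=\sum_\ell (a_\ell\rho_\ell)/a_\ell\le\bigl(\max_{\ell'}a_{\ell'}\rho_{\ell'}\bigr)\sum_\ell 1/a_\ell$, whence $\max_\ell a_\ell\rho_\ell\ge 1/\sum_\ell(1/a_\ell)=b_fR_p/\sum_\ell C_\ell$, with equality exactly when $a_\ell\rho_\ell$ takes the same value on every link --- that is, when the transmission delay coincides on every established ISL and on the G2S link when it is visible. The unique minimizer is $\rho_\ell^\star=C_\ell/\sum_{\ell'}C_{\ell'}>0$, which respects \eqref{p1_rho} because it is supported on exactly the established links. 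This proves the first bullet, and identifies the optimal inner value as $V(\mathcal{N})=\frac{b_fR_p}{\sum_{\ell\in\mathcal{N}}C_\ell+C_{\mathrm{G2S}}}$, where $C_{\mathrm{G2S}}=C(g_m,s_j)$ if the G2S link is visible and $C_{\mathrm{G2S}}=0$ otherwise.

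For the outer choice of $\mathcal{N}$, note that $V(\mathcal{N})$ strictly decreases whenever one more eligible ISL is added, since every $C_\ell>0$ enlarges the denominator; using the G2S link whenever it is visible likewise only helps, and establishing an ISL while assigning it zero download ratio is strictly worse than actually using it. Consequently the optimum establishes and uses as many ISLs as the constraints permit, so the number of ISLs established by $s_j$ equals $\min\{N_{\mathrm{ISL},\max},\,M_f(s_j)\}$, where $M_f(s_j)$ is the number of satellites visible from $s_j$ that have cached $f$ --- exactly the second bullet.

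I expect no conceptual difficulty beyond the one-line water-filling inequality; what needs care is bookkeeping: (i) justifying the per-file decoupling from the index ranges in \eqref{p1_maxISL}--\eqref{p1_sumrho}, as no ISL or GS resource is shared across files in this formulation; (ii) explicitly using the negligible-propagation assumption so that equalizing \emph{transmission} delays --- as the lemma asserts --- is the same as minimizing the total Phase 1 delay; and (iii) the degenerate cases, namely when no G2S link is visible (so $\rho_f(g_m,s_j)=0$ is forced and only ISLs are used) and when fewer than $N_{\mathrm{ISL},\max}$ cached copies are visible (so all of them are used and the first alternative in the second bullet does not arise).
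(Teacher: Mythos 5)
Your proof is correct and takes essentially the same route as the paper's: the min--max structure of the Phase-1 delay forces the transmission delay to be equal on every link that carries a nonzero ratio (otherwise the ratios can be rebalanced), and since each additional eligible ISL strictly enlarges the aggregate capacity $\sum_{\ell}C_\ell + C(g_m,s_j)$ in the denominator of the resulting per-satellite value, the optimum saturates the ISL budget at $\min\{N_{\mathrm{ISL},\max}, M_f(s_j)\}$. The paper gives both steps only as a two-sentence sketch, so your explicit harmonic-mean inequality for the inner problem and the closed-form optimal value (which is exactly the expression the paper later uses to obtain $\mathcal{P}2$) are a rigorization of the same argument rather than a different approach.
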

	\begin{proof}
		According to \eqref{equ:delay_ISL}, the objective function in $\mathcal{P}1$ takes the `minimize-the-maximum' form. As such, the optimal transmission delay for each established link should be consistent. In other words, if there exists any link whose delay exceeds others', we can always increase the download ratio of other links to balance the delay. Secondly, each satellite tends to establish as many ISLs as possible to fully use the high-capacity laser ISLs to minimize the transmission delay. 
	\end{proof}
	
	Based on Lemma \ref{lemma1}, we have for any $s_j$ and $s_k$ that satisfy ${x^*_f}\left( {s_k,s_j} \right)=1$, the optimal transmission delay of each ISL/G2S for $s_j$ can be calculated as
	\begin{equation}\label{tranformed_delay}
		\begin{aligned}
			{t^*_{f,{\rm{ISL}}}}\left( {s_j}\right)=
			\frac{{{\rho _f}\left( {s_k,s_j} \right){b_f}{R_p}}}{{{C}\left( {s_k,s_j} \right)}}=
			\frac{{{\rho _f}\left( {g_m,s_j} \right){b_f}{R_p}}}{{{C}\left( {g_m,s_j} \right)}}=\\
			\frac{{{b_f}{R_p}}}{\sum\limits_{s_k \in {{\mathcal S}}}{x_f}\left( {s_k,s_j} \right){{C}\left( {s_k,s_j} \right)}+C(g_m,s_j)}.
		\end{aligned}
	\end{equation} 
	
	According to \eqref{tranformed_delay}, the continuous variable $\rho_f$ can be eliminated, and $\mathcal{P}1$ can be equivalently transformed into the following binary integer programming (BIP) problem:
	\begin{subequations}
		\begin{align}
			{\mathcal{P}}2:	\mathop {\min }\limits_{\bf{X}}& \sum\limits_{s_j \in {{\mathcal S}}}
			\frac{{{b_f}{R_p}}}{\sum\limits_{s_k \in {{\mathcal S}}}{x_f}\left( {s_k,s_j} \right){{C}\left( {s_k,s_j} \right)}+C(g_m,s_j)} \\
			\label{p2_maxISL}{\rm s.t.} \quad   & \sum\limits_{s_k \in {{\mathcal S}}} {{x_f}\left( {s_k,s_j} \right)}  \le {N_{{\rm{ISL}},\max }}, \; \forall s_j \in {\mathcal S}, \\
			& {x_f}\left( {s_k,s_j} \right) \in \{0,1\}, \forall s_j, s_k \in {\mathcal S}. \label{p2_bi}
		\end{align}
	\end{subequations}

	\begin{remark}\label{remark2}(ISL selection for single or sparse file request)
		From ${\mathcal{P}}2$, it can be observed that if there is only one satellite sending file request, we can simply select ${N_{{\rm{ISL}},\max }}$ adjacent satellites with the highest capacity to minimize the total delay. This conclusion is also true when the file requests are sparse. However, when the number of requests is large, link $l_f(s_j,s_k)$ with the highest capacity for $s_j$ may not be a desired link for $s_k$. Consequently, directly selecting the ISLs with the highest capacity may conflict with constraint \eqref{p2_maxISL}. This inspires us to devise an efficient satellite association scheme.
	\end{remark}
	
	The transformed $\mathcal{P}2$ is still challenging due to the following reasons: First, the relaxation and rounding-based method is not suitable since ${x_f}\left( {s_k,s_j} \right) $ will be the same as ${\rho_f}\left( {s_k,s_j} \right)$ after relaxation and thus loses its physical meaning. Moreover, the exhaustive search-based methods, such as branch and bound, are computationally prohibitive since the problem is generally large-scale for a mega satellite constellation. 
	Therefore, to tackle this BIP problem, we propose an exact penalty method (EPM) to solve it with low complexity. Specifically, we first provide the following theorem to replace the binary constraint \eqref{p2_bi} with several equality constraints: 
	\begin{theorem}\label{theorem_1}
		Define $\mathbf{u}\in\mathbb{R}^{m\times n}$, $\mathbf{v}\in\mathbb{R}^{m\times n}$ and  $\Theta \triangleq \left\lbrace (\mathbf{u},\mathbf{v})|\langle \mathbf{u},\mathbf{v}\rangle=mn,-1\preceq \mathbf{u} \preceq 1, \Vert \mathbf{v}\Vert_2^2\leq mn \right\rbrace $. If $(\mathbf{u},\mathbf{v}) \in \Theta$, then $\mathbf{u}\in \{-1,1\}^{m\times n}$,  $\mathbf{v}\in \{-1,1\}^{m\times n}$, and $\mathbf{u}=\mathbf{v}$.
	\end{theorem}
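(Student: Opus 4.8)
The plan is to obtain everything from a single application of the Cauchy--Schwarz inequality combined with the entrywise box constraint on $\mathbf{u}$. First I would unpack $-1\preceq\mathbf{u}\preceq1$: it says $u_{ij}\in[-1,1]$ for every entry, so $u_{ij}^2\le 1$ and hence $\Vert\mathbf{u}\Vert_2^2=\sum_{i,j}u_{ij}^2\le mn$. Together with the hypothesis $\Vert\mathbf{v}\Vert_2^2\le mn$ this yields $\Vert\mathbf{u}\Vert_2\le\sqrt{mn}$ and $\Vert\mathbf{v}\Vert_2\le\sqrt{mn}$.

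Next I would chain these bounds against the inner-product constraint via Cauchy--Schwarz:
\[
mn=\langle\mathbf{u},\mathbf{v}\rangle\le\Vert\mathbf{u}\Vert_2\,\Vert\mathbf{v}\Vert_2\le\sqrt{mn}\cdot\sqrt{mn}=mn,
\]
which forces every inequality in the chain to hold with equality. Tightness of the second inequality gives $\Vert\mathbf{u}\Vert_2=\Vert\mathbf{v}\Vert_2=\sqrt{mn}$; in particular $\Vert\mathbf{u}\Vert_2^2=mn$, i.e. $\sum_{i,j}(1-u_{ij}^2)=0$ with each summand nonnegative, so $u_{ij}^2=1$ for all $i,j$, that is $\mathbf{u}\in\{-1,1\}^{m\times n}$.

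Finally I would pin down $\mathbf{v}$ using the equality condition of Cauchy--Schwarz: since $\langle\mathbf{u},\mathbf{v}\rangle=\Vert\mathbf{u}\Vert_2\Vert\mathbf{v}\Vert_2$ and $\mathbf{u}\neq0$, the two matrices (viewed as vectors in $\mathbb{R}^{mn}$) are proportional, $\mathbf{v}=\lambda\mathbf{u}$, and the nonnegativity of $\langle\mathbf{u},\mathbf{v}\rangle$ forces $\lambda\ge0$; substituting back, $mn=\langle\mathbf{u},\mathbf{v}\rangle=\lambda\Vert\mathbf{u}\Vert_2^2=\lambda mn$, so $\lambda=1$ and $\mathbf{v}=\mathbf{u}$, whence also $\mathbf{v}\in\{-1,1\}^{m\times n}$. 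There is no genuine obstacle in this argument; the only points deserving a moment of care are the correct invocation of the Cauchy--Schwarz equality case (linear dependence together with the sign of the proportionality constant, which is legitimate precisely because $\mathbf{u}\neq0$) and the elementary observation that a sum of nonnegative terms vanishes only when each term does.
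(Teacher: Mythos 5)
Your proof is correct. It opens the same way as the paper's: the box constraint gives $\Vert\mathbf{u}\Vert_2^2\le mn$, Cauchy--Schwarz against $\langle\mathbf{u},\mathbf{v}\rangle=mn$ forces $\Vert\mathbf{u}\Vert_2^2=mn$, and the nonnegative-summands argument pins $\mathbf{u}\in\{-1,1\}^{m\times n}$. Where you diverge is in handling $\mathbf{v}$: the paper runs a \emph{second} inequality chain, $mn=\sum u_{ij}v_{ij}\le\sum|u_{ij}||v_{ij}|\le\sum|v_{ij}|\le\sqrt{mn}\,\Vert\mathbf{v}\Vert_2$, to force $\Vert\mathbf{v}\Vert_2^2=mn$ and extract $\mathbf{v}\in\{-1,1\}^{m\times n}$ from the equality cases there, and only at the very end deduces $\mathbf{u}=\mathbf{v}$ from the fact that two sign matrices with $\langle\mathbf{u},\mathbf{v}\rangle=mn$ must agree entrywise. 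You instead invoke the equality condition of the \emph{original} Cauchy--Schwarz application to get proportionality $\mathbf{v}=\lambda\mathbf{u}$ with $\lambda=1$, so $\mathbf{v}=\mathbf{u}$ comes first and $\mathbf{v}\in\{-1,1\}^{m\times n}$ falls out for free. Your route is the more economical of the two (one Cauchy--Schwarz application rather than two chains), and you are right to flag that the proportionality step needs $\mathbf{u}\neq\mathbf{0}$, which holds because the inner product is $mn>0$; as a side benefit it sidesteps a small typo in the paper's second chain, where the final bound should read $\sqrt{mn}\,\Vert\mathbf{v}\Vert_2$ rather than $\sqrt{mn}\,\Vert\mathbf{u}\Vert_2$.
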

	
	\begin{proof}
		Please see Appendix A.
	\end{proof}
	
	Based on Theorem \ref{theorem_1}, by introducing ${\pi_f}\left( {s_k,s_j} \right)=2{x_f}\left( {s_k,s_j} \right)-1$, constraint \eqref{p2_bi} is equivalent to three constraints: $\langle \boldsymbol{\pi},\mathbf{v}\rangle=S^2,-1\preceq \boldsymbol{\pi} \preceq 1, \Vert \mathbf{v}\Vert_2^2\leq S^2$. Then, $\mathcal{P}2$ is equivalent to the following optimization problem:
	\begin{subequations}
		\begin{equation}
			\resizebox{1\hsize}{!}{$	{\mathcal{P}}3:	 \mathop {\min } \limits_{\boldsymbol{\pi},\mathbf{v}} \; F(\mathbf{\boldsymbol{\pi},\bf{v}})  \triangleq  \sum\limits_{s_j \in {{\mathcal S}}}
				\frac{{{b_f}{R_p}}}{\sum\limits_{s_k \in {{\mathcal S}}}\frac{{\pi_f}\left( {s_k,s_j} \right)+1}{2}{{C}\left( {s_k,s_j} \right)}+C(g_m,s_j)}  $}
		\end{equation} 
		\begin{equation}
			{\rm s.t.} \quad   \sum\limits_{s_k \in {{\mathcal S}}} \frac{{\pi_f}\left( {s_k,s_j} \right)+1}{2}  \le {N_{{\rm{ISL}},\max }}, \; \forall s_j \in {\mathcal S},
		\end{equation}
		\begin{equation}\label{p3_bi}
			-1\preceq \boldsymbol{\pi} \preceq 1, \quad  \Vert \mathbf{v}\Vert_2^2\leq S^2,
		\end{equation}
		\begin{equation}\label{p3_pel}
			\langle \boldsymbol{\pi},\mathbf{v}\rangle=S^2.
		\end{equation}
	\end{subequations}

	To tackle the equality constraint \eqref{p3_pel} in $\mathcal{P}3$, we add a penalty term to $F(\mathbf{\boldsymbol{\pi},\bf{v}})$, yielding the following optimization problem:
	\begin{subequations}
		\begin{align}
			{\mathcal{P}}4:	\mathop {\min }\limits_{\boldsymbol{\pi},\mathbf{v}}& \; \mathcal{J}_\rho(\boldsymbol{\pi},\mathbf{v}) \triangleq  F(\mathbf{\boldsymbol{\pi},\bf{v}}) +\epsilon\left( S^2-\langle \boldsymbol{\pi},\mathbf{v}\rangle\right)  \\
			{\rm s.t.} \quad  & \sum\limits_{s_k \in {{\mathcal S}}} \frac{{\pi_f}\left( {s_k,s_j} \right)+1}{2}  \le {N_{{\rm{ISL}},\max }}, \; \forall s_j \in {\mathcal S}, \\
			&-1\preceq \boldsymbol{\pi} \preceq 1,  \quad   \Vert \mathbf{v}\Vert_2^2\leq S^2, \label{p4_bi}
		\end{align}
	\end{subequations}
	where $\epsilon$ is the penalty parameter, which is gradually increased to enforce the equality of \eqref{p3_pel}. Specifically, the penalty parameter is updated by
	\begin{equation}\label{equ:update}
		\epsilon^{(l+1)}=\epsilon^{(l)}\times \Delta,
	\end{equation}
	where $\Delta$ is a constant and $l$ is the number of iteration. Then, in the $l$-th iteration, we optimize $\boldsymbol{\pi}$ and $\mathbf{v}$ in an iterative manner. The detailed procedure for solving $\mathcal{P}4$ is introduced as follows.
	
	\subsubsection{Initialization}
	In the first iteration, i.e., $l=0$, $\epsilon$ is initialized as a relatively small number approaching zero. Besides, we initialize $\mathbf{v}^{(0)} = \mathbf{0}$ to find a reasonable local-optimal point.

	\subsubsection{Optimize $\bm{ \pi}$ for given $\mathbf{v}$} 
	For each fixed $\epsilon$ and $\bf v$, the $\boldsymbol{\pi}$-subproblem can be written as
	\begin{subequations}
		\begin{align}
			{\mathcal{P}}{\rm{4\text{-}1}}:	\mathop {\min }\limits_{\boldsymbol{\pi}}& \; \mathcal{J}_\rho(\boldsymbol{\pi}^{(l+1)},\mathbf{v}^{(l)})   \\
			{\rm s.t.} \quad  & \sum\limits_{s_k \in {{\mathcal S}}} \frac{{\pi_f}\left( {s_k,s_j} \right)+1}{2}  \le {N_{{\rm{ISL}},\max }}, \; \forall s_j \in {\mathcal S}, \\
			&-1\preceq \boldsymbol{\pi} \preceq 1. \label{p41_bi}
		\end{align}
	\end{subequations}
	${\mathcal{P}}{\rm{4\text{-}1}}$ is a convex optimization problem, whose optimal solution can be obtained via standard optimization toolbox such as CVX\cite{cvx}.
	
	\subsubsection{Optimize $\bf{v}$ for given ${\bm \pi}$}  
	The $\mathbf{v}$-subproblem can be written as
	\begin{subequations}
		\begin{align}
			{\mathcal{P}}{\rm{4\text{-}2}}:	\mathop {\min }\limits_{\mathbf{v}}& \; \langle -\boldsymbol{\pi}^{(l+1)},\mathbf{v}^{(l+1)}\rangle   \\
			{\rm s.t.} \quad  & \; \Vert \mathbf{v}\Vert_2^2\leq S^2.
		\end{align}
	\end{subequations}
	It can be observed from ${\mathcal{P}}{\rm{4\text{-}2}}$ that unless $\boldsymbol{\pi}\neq\mathbf{0}$, the optimal solution will be achieved in the boundary with $\Vert \mathbf{v}\Vert_2^2=S^2$. Hence, the optimal solution to the $\mathbf{v}$-subproblem can be expressed in a closed form as:
	\begin{equation}\label{opt_v}
		\mathbf{v}^{(l+1)}=\begin{cases}
			S\cdot \boldsymbol{\pi}^{(l+1)}/\Vert\boldsymbol{\pi}^{(l+1)}\Vert_2^2, \;\;\;\text{if} \; \boldsymbol{\pi}^{(l+1)}\neq\mathbf{0},\\
			\text{any} 	\;\mathbf{v} \; \text{with}	\Vert \mathbf{v}\Vert_2^2\leq S^2, \; \text{otherwise}. 
		\end{cases}
	\end{equation}
	
	\begin{algorithm}[!t]
		\caption{Proposed EPM for solving $\mathcal{P}$4}
		\begin{algorithmic}[1]\label{alg1}
			\State {Initialize iteration number $l=0$, $\boldsymbol{\pi}^{(0)}=\mathbf{v}^{(0)}=\mathbf{0}$, $\epsilon^{(0)}>0$, $\Delta>0$, convergence threshold $\xi$.}
			\State \quad {\bf repeat}   
			\State \quad\quad  Update $\boldsymbol{\pi}^{(l+1)}$ by solving ${\mathcal{P}}{\rm{4\text{-}1}}$.
			\State \quad\quad  Update $\mathbf{v}^{(l+1)}$ by \eqref{opt_v}.
			\State \quad\quad  Update the penalty factor.
			\State \quad\quad $l=l+1$.
			\State \quad {\bf until} $|S^2-\langle \boldsymbol{\pi},\mathbf{v}\rangle|\leq \xi$.
			\State Transform $\boldsymbol{\pi}$ into ${x_f}\left( {s_k^t,s_j^t} \right)$.
		\end{algorithmic}
	\end{algorithm}
	
	The detailed procedure for solving $\mathcal{P}4$ is summarized in Algorithm \ref{alg1}.
	
	\subsection{Complexity Analysis}
	At each iteration of Algorithm \ref{alg1}, the computational complexity involved in solving the convex problems ${\mathcal{P}}{\rm{4\text{-}1}}$ remains polynomial with respect to the number of variables and constraints. We consider the worst case, wherein all the aircraft have content delivery requires and their associated satellite can set up $ N_{\rm{ISL},\max}$ ISLs with neighboring satellites. Then, ${\mathcal{P}}{\rm{4\text{-}1}}$ is an optimization problem with $A N_{\rm{ISL},\max}$ real-valued variables, a linear objective, and $A + A N_{\rm{ISL},\max}$ linear constraints. Thus, the complexity required to solve ${\mathcal{P}}{\rm{4\text{-}1}}$ is upper bounded by $\mathcal{O}\left( {\left( {A + 2A{N_{{\text{ISL}},\max }}} \right){{\left( {A{N_{{\text{ISL}},\max }}} \right)}^2}\sqrt {A + A{N_{{\text{ISL}},\max }}} } \right)$. Eq. (\ref{opt_v}) is a closed-form expression with complexity $\mathcal{O}\left( {A{N_{{\text{ISL}},\max }}} \right)$ and the complexity of updating the penalty factor in (\ref{equ:update})  is $\mathcal{O}\left(1\right)$.
	
	The computation complexity of exhaustive method exhibits an exponential growth relationship with the number of binary variables. That is, the computational complexity of only solving the binary optimization problem is upper bounded by $\mathcal{O}\left(2^{A{N_{{\text{ISL}},\max }}}\right)$, which is much larger than that of our proposed schemes.
	
	\section{Delivery Scheme of the Non-Cached Files}\label{sec:noncached}
	In this section, we introduce the delivery schemes of the non-cache files, which are not cached on non-terrestrial platforms and can only be acquired from GSs. Compared to downloading cached files, the bandwidth allocation of GSs should also be considered in this case.
	We propose an efficient algorithm based on alternating optimization to solve the corresponding delay minimization problem.
	
	\begin{figure}[h]
		\centering
		\includegraphics[width = 0.3\textwidth]{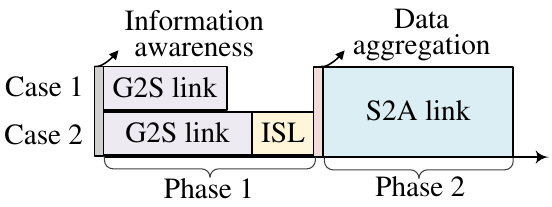}
		\caption{Timeline of cooperative transmission of non-cached files when aircraft connects to the nearest satellite. \label{fig:parallel_delay_uncache}}
	\end{figure}
	
	\subsection{Delivery Process of Non-Cached Files} 
	Similar to section \ref{sec:cached}, we omit the simple case that the aircraft $a_i^t$ can directly download files via the G2A link. We focus that the aircraft obtains the file from satellite GSs via its nearest satellite $s_j^t$, while other satellites act as relay nodes. 
	Then, the cooperative transmission scheme, as shown in Fig. \ref{fig:parallel_delay_uncache} and Fig. \ref{fig:noncache_case}, can be categorized into two cases:
	\begin{enumerate}[label=(\arabic*)]
		\item When satellite $s_j^t$ is visible to a satellite GS, $s_j^t$ retrieves all the content from the nearest satellite GS $g_m$ directly.

		\item When satellite $s_j^t$ cannot set up a G2S link with any satellite GS, $s_j^t$ connects to its multiple neighboring satellites $s_k^t$ which are visible to satellite GSs.\footnote{We consider at most two-hop transmission in this paper, while the extension to multi-hop relay is left as our future work.} Each $s_k^t$ downloads the file from its GS ${{g_m}}$ and then transmits it to $s_j$.
		
	\end{enumerate}
	\begin{figure}[t]
		\centering
		\includegraphics[width = 0.33\textwidth]{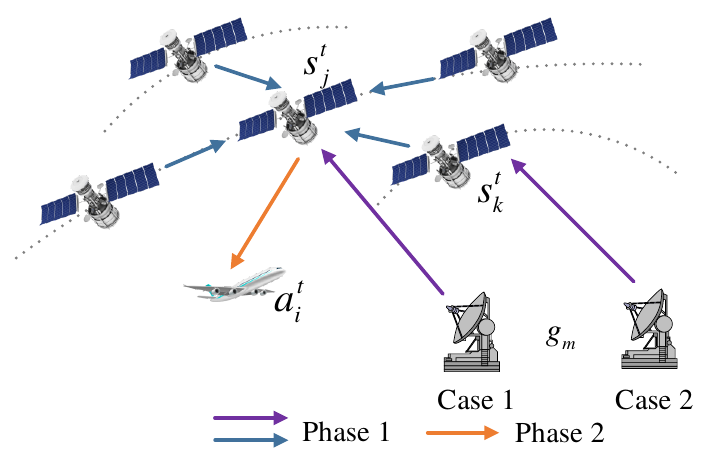}
		\caption{Delivery process of the non-cached files when aircraft download the files from satellite networks. \label{fig:noncache_case}}
	\end{figure}
	
	Here, the total delay of case 1 is a constant, which can be calculated as
	\begin{equation}
		\begin{split}
			& t_{f,{\rm{SN}}}^{{\rm{non}},{\rm{case1}}}\left( {s_j^t,a_i^t} \right) \\
			& = \underbrace {t_f^{{\rm{tran}}}\left( {{g_m},s_j^t} \right) + t_f^{{\rm{prop}}}\left( {{g_m},s_j^t} \right)}_{{\rm{Phase\,1}}} + \underbrace {{t_{f,{\rm{S2A}}}}\left( {s_j^t,a_i^t} \right)}_{{\rm{Phase\,2}}}.
		\end{split}
	\end{equation}
	
	We focus on the delay characterization of case 2 in the following. Similar to Section \ref{sec:cached}, the optimization variables of non-cached files also involve the satellite association strategy between $s_j^t$ and its neighboring satellites $s_k^t$, as well as the download ratio of satellite $s_j^t$ from GS $g_m$ via $s_k^t$. To differentiate between cached and non-cached delivery schemes, we use ${{\tilde{\bf X} }}$ and ${\tilde{\bm{\rho}}}$ to represent the corresponding satellite association and download ratio variables for transmitting non-cached files, where $\tilde{{\bf{X}} }= \left\{ {\left. {{{\tilde x}_f}\left( {s_k^t,s_j^t} \right)} \right|f \in {\cal F},s_j^t,s_k^t \in {\cal S}} \right\}$ and 
	$ \tilde{\bm \rho} = \left\{ {\left. {{{\tilde \rho} _f}\left( {s_k^t,s_j^t} \right)} \right|f \in {\cal F},s_k^t,s_j^t \in {\cal S}} \right\} $. Moreover, since the frequency resource of terrestrial networks is scarce, the bandwidth allocation $\bm \omega = \left\{ {\left. {{{\omega}}\left( {g_m, s_k^t} \right)} \right| g_m \in {\mathcal{G}}, s_k^t \in {\cal S}} \right\} $ requires proper optimization.

	When satellite $s_j^t$ downloads file $f$ from GS $g_m$ via neighboring satellite $s_k^t$, the two-hop delay can be expressed as
	\begin{equation}\label{equ:uncache_delay_G2S}
		\resizebox{1\hsize}{!}{$ t_f^{{\rm{relay}}}\left( {{g_m},s_k^t,s_j^t} \right) = t_f^{{\rm{tran}}}\left( {{g_m},s_k^t} \right) + t_f^{{\rm{tran}}}\left( {s_k^t,s_j^t} \right) + t_f^{{\rm{prop}}}\left( {{g_m},s_j^t} \right), $}
	\end{equation} 
	where $t_f^{{\rm{tran}}}\left( {{g_m},s_k^t} \right)$ and $t_f^{{\rm{tran}}}\left( {s_k^t,s_j^t} \right)$ denote the transmission delay of G2S and ISL, respectively, and $t_f^{{\rm{prop}}}\left( {{g_m},s_j^t} \right)$ denotes the sum of propagation delay.
	Note that the two terms of transmission delay in (\ref{equ:uncache_delay_G2S}) are related to the ratio ${\tilde \rho _f}\left( {s_k^t,s_j^t} \right)$.
	
	Then, the delay of $s_j^t$ downloading file $f$ with the assistance of all other satellites via ISLs, termed as \textit{Phase 1}, can be expressed as
	\begin{equation}
		t_{f, \rm ISL}^{{\rm{non, case2}}}\left( {s_j^t} \right) = \mathop {\max }\limits_{s_k^t} \left\{ {{\tilde x_f}\left( {s_k^t,s_j^t} \right) \cdot t_f^{{\rm{relay}}}\left( {{g_m},s_k^t,s_j^t} \right)} \right\}.
	\end{equation}
	
	Therefore, we can obtain the total delay when aircraft $a_i^t$ connects to satellite $s_j^t$ to obtain non-cached file $f$ from $g_m$, given as follows.
	\begin{equation}
		t_{f,{\rm{SN}}}^{{\rm{non, case2}}}\left( {s_j^t,a_i^t} \right) = \underbrace {t_{f, \rm ISL}^{{\rm{non, case2}}}\left( {s_j^t} \right)}_{{\rm{Phase\,1}}} + \underbrace {{t_{f,{\rm{S2A}}}}\left( {s_j^t,a_i^t} \right)}_{{\rm{Phase\,2}}},
	\end{equation}
	where ${{t_{f,{\rm{S2A}}}}\left( {s_j^t,a_i^t} \right)}$ in Phase 2 is determined in a similar way as (\ref{equ:delay_SN_cache}). 
	
	Finally, the total delay for downloading non-cached file $f$ by aircraft $a_i^t = {\rm{sn}}_f$ can be expressed as follows.
	\begin{equation}
		\begin{split}
			& t_f^{{\rm{non-cached}}}\left( {a_i^t} \right) \\
			& = \left\{ {\begin{array}{*{20}{l}}
					{{t_f}\left( {{g_m},a_i^t} \right),{\rm{if}}\;{l_f}\left( {{g_m},a_i^t} \right) \in {{\cal L}_{{\rm{G2A}}}},}\\
					{t_{f,{\rm{SN}}}^{{\rm{non, case1}}},{\rm{if}}\;{l_f}\left( {{g_m},a_i^t} \right) \notin {{\cal L}_{{\rm{G2A}}}} \; {\rm{and}} \; {l_f}\left( {{g_m},s_i^t} \right) \in {{\cal L}_{{\rm{G2S}}}},}\\
					{t_{f,{\rm{SN}}}^{{\rm{non, case2}}}\left( {s_j^t,a_i^t} \right),{\rm{otherwise}}.}
			\end{array}} \right.
		\end{split}
	\end{equation}

	\subsection{Problem Formulation}
	Similar to the case of cached files, we aim to jointly optimize the satellite association $\{\tilde{\bf X}\}$, file download ratio $\{\tilde{\bm{\rho}}\}$, and the bandwidth allocation $\{{\bm{\omega}}\}$, to minimize the total delay of all non-cached files $\{{{t_f^{{\rm{non - cached}}}}\left( {a_i^t} \right)} \}$, which is equivalent to minimize the sum of $\{t_{f}^{\rm relay}\left( {{g_m},s_k^t,s_j^t} \right)\}$. This problem can be formulated as follows.
	\begin{subequations}
		\begin{equation}\label{p5}
			{\mathcal{P}}5:	\mathop {\min }\limits_{{\tilde{\bf X}}, {\tilde{\bm {\rho}}},\boldsymbol{\omega} } \; {\sum\limits_f t_{f}^{\rm relay}\left( {{g_m},s_k^t,s_j^t} \right)} 
		\end{equation}
		\begin{equation}\label{p5_maxISL}
			{\rm{s.t.}} \quad   \sum\limits_{s_k^t \in {{\mathcal S}}} {{\tilde x_f}\left( {s_k^t,s_j^t} \right)}  \le {N_{{\rm{ISL}},\max }}, \; \forall s_j^t \in {\mathcal S},
		\end{equation}
		\begin{equation} \label{p5_rho}
			{\tilde \rho _f}\left( {s_k^t,s_j^t} \right) \le {\tilde x_f}\left( {s_k^t,s_j^t} \right), \; \forall  s_j^t, s_k^t \in {\mathcal S},
		\end{equation}
		\begin{equation}\label{p5_sumrho}
			\sum\limits_{s_k^t \in {{\mathcal S}}} {{\tilde  \rho _f}\left( {s_k^t,s_j^t} \right)}   = 1, \; \forall s_j^t \in {\mathcal S},
		\end{equation}
		\begin{equation}
			{\tilde x_f}\left( {s_k^t,s_j^t} \right) \in \left\{ {0,1} \right\}, \; \forall s_j^t, s_k^t \in {\mathcal S},
		\end{equation}
		\begin{equation}\label{noncache_band}
			\sum\limits_{s_k^t \in {\mathcal S}}\omega \left( {{g_m},{s_k^t}} \right) \leq 1, \; \forall g_m \in {\mathcal G},
		\end{equation}
	\end{subequations}
	where \eqref{noncache_band} ensures the sum of bandwidth allocation factors does not exceed 1. Similar as $\mathcal{P}$1, we omit the subscript $t$ herein-below.

	\subsection{Proposed Efficient Alternating Optimization Algorithm}
	To tackle the above MINLP problem, we propose an efficient algorithm based on alternating optimization. Specifically, we first effectively group the variables and then
	optimize the satellite association and bandwidth allocation iteratively until convergence. The details are given as follows:
	
	\subsubsection{Optimize $\tilde{\bf{X}}$ and $\tilde{\bm{\rho}}$ for given bandwidth allocation $\bm{\omega}$}
	
	As concluded in \textbf{Lemma \ref{lemma1}}, the optimal transmission delay through each established link should be equal. Therefore, the common transmission delay of each file from source node $g_m$ to destination $s_j$ via relay node $s_k$ can be calculated as 
	\begin{equation}
		\begin{split}
			t^*_{f}(g_m,s_k,s_j) & =\frac{{{\tilde \rho _f}\left( {s_k,s_j} \right){b_f}{R_p}}}{{{C}\left( {s_k,s_j} \right)}}+\frac{ {{\tilde \rho _f}\left( {s_k,s_j} \right){b_f}{R_p}} }{C( {g_m,s_k} )}\\
			&=\frac{{b_f}{R_p}}{\sum\limits_{s_k\in\mathcal{S}} \frac{\tilde x_f(s_k,s_j)} {\frac{1}{C(s_k,s_j)}+\frac{1}{C(g_m,s_k)}} }.
		\end{split}
	\end{equation}
	As such, the continuous variable ${\tilde  \rho}_f(s_k,s_j)$ is eliminated. The satellite association subproblem for $\mathcal{P}5$ can be formulated into the following BIP:
	\begin{subequations}
		\begin{align}
			{\mathcal{P}}{\rm{5\text{-}1}}:	\mathop {\min }\limits_{{\tilde{\bf X}}}&\;\; t^*_{f}(g_m,s_k,s_j) \\
			\label{p51_maxISL}{\rm s.t.}  & \sum\limits_{s_k \in {{\mathcal S}}} {{\tilde x_f}\left( {s_k,s_j} \right)}  \le {N_{{\rm{ISL}},\max }}, \; \forall s_j,s_k \in {\mathcal S}, \\
			& {\tilde x_f}\left( {s_k,s_j} \right) \in \{0,1\}, \forall s_j, s_k \in \mathcal{S}. \label{p51_bi}
		\end{align}
	\end{subequations}

	Note that $\mathcal{P}$5-1 has the same form as $\mathcal{P}$2, and hence can be solved via the proposed EPM in Section \ref{sec:cached}. We omit the detailed procedure here.
	
	\subsubsection{Optimize $\bm \omega$  and $\tilde{\bm{\rho}}$ for given task-relay satellite association $\tilde{\mathbf{X}}$} This subproblem can be written as 
	\begin{subequations}
		\begin{align}
			{\mathcal{P}}{\rm{5\text{-}2}}:	\mathop {\min }\limits_{\tilde{\boldsymbol{\rho}},\boldsymbol{\omega}}&\;\; \frac{{{\tilde \rho _f}\left( {s_k,s_j} \right){b_f}{R_p}}}{{{C}\left( {s_k,s_j} \right)}}+\frac{ {{\tilde \rho_f}\left( {s_k,s_j} \right){b_f}{R_p}} }{\omega\left( {{g_m},{s_k}} \right){\log _2} \left( {1 + \frac{c(g_m,s_k)}{\omega( {{g_m},{s_k}} )} }\right) } \\
			{\rm s.t.} \quad  & \; {\tilde \rho _f}\left( {s_k,s_j} \right) \le {\tilde x_f}\left( {s_k,s_j} \right), \; \forall s_j, s_k \in {\mathcal S}, \\
			& \; {\tilde \rho_f}\left( {s_k,s_j} \right) \in [0,1], \; \forall s_j,s_k \in {\mathcal S}, \\
			& \; 	\sum\limits_{s_k \in\mathcal{S}}\omega\left( {{g_m},{s_k}} \right)\leq  1, \; \forall g_m \in \mathcal{G}.
		\end{align}
	\end{subequations}
	
	The difficulty of solving $\cal P$5-2 lies in the second term of the objective function, which is generally non-convex. To tackle it, we give the following lemma, which helps to transform the product of two convex functions into a tractable form.

	\begin{lemma}\label{lemma2}
		Suppose $f(\mathbf{x},\mathbf{y})=f_1(\mathbf{x})f_2(\mathbf{y})$ is the product of two convex and non-negative functions $f_1(\mathbf{x})$ and $f_2(\mathbf{y})$. Then, $f(\mathbf{x},\mathbf{y})$ can be rewritten to have the following difference of convex (DC) structure:
		\begin{equation}
			f(\mathbf{x},\mathbf{y})=\frac{1}{2}\left(f_1(\mathbf{x})+f_2(\mathbf{y}) \right)^2-\frac{1}{2}\left( f_1^2(\mathbf{x})+f_2^2(\mathbf{y})\right).
		\end{equation}
		
		Recall that any convex function is globally lower-bounded by its first-order Taylor expansion at any point.	A convex upper-bound of the above DC structure can be derived using the successive convex approximation as follows:
		\begin{equation}
			\begin{split}
				& \bar{f}(\mathbf{x},\mathbf{y},\mathbf{x}_0,\mathbf{y}_0) \triangleq\frac{1}{2}\left(f_1(\mathbf{x})+f_2(\mathbf{y}) \right)^2\\
				& -\frac{1}{2}\left( f_1^2(\mathbf{x}_0)+f_1^{'}(\mathbf{x}_0)(\mathbf{x}-\mathbf{x}_0)+f_2^2(\mathbf{y}_0)+f_2^{'}(\mathbf{y}_0)(\mathbf{y}-\mathbf{y}_0)\right) \\ 
				& \geq
				f(\mathbf{x},\mathbf{y}),
			\end{split}
		\end{equation}
		where ${\bf{x}}_0$ and ${\bf{y}}_0$ are given local points.
	\end{lemma}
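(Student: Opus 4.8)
**The plan is to prove Lemma~\ref{lemma2} in two parts: first verify the algebraic identity giving the DC decomposition, then apply the standard first-order (successive convex approximation) bound to the concave part.**

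For the identity, I would simply expand the right-hand side. Writing $a = f_1(\mathbf{x})$ and $b = f_2(\mathbf{y})$ for brevity, we have $\tfrac{1}{2}(a+b)^2 - \tfrac{1}{2}(a^2+b^2) = \tfrac{1}{2}(a^2 + 2ab + b^2) - \tfrac{1}{2}(a^2 + b^2) = ab$, which is exactly $f_1(\mathbf{x})f_2(\mathbf{y}) = f(\mathbf{x},\mathbf{y})$. This is a one-line computation and needs no hypotheses beyond the definition. Next I would argue that this is genuinely a DC (difference-of-convex) structure: the first term $\tfrac{1}{2}(f_1(\mathbf{x}) + f_2(\mathbf{y}))^2$ is convex because $f_1$ and $f_2$ are convex and non-negative (so $f_1(\mathbf{x}) + f_2(\mathbf{y})$ is convex and non-negative, and $g \mapsto \tfrac{1}{2}g^2$ is convex and non-decreasing on $[0,\infty)$, hence the composition is convex); similarly $f_1^2(\mathbf{x})$ and $f_2^2(\mathbf{y})$ are each convex by the same composition argument, so $\tfrac{1}{2}(f_1^2(\mathbf{x}) + f_2^2(\mathbf{y}))$ is convex. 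Thus $f = (\text{convex}) - (\text{convex})$.

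For the convex upper bound, I would invoke the fact already recalled in the statement: any convex function lies above its first-order Taylor expansion at any point. Apply this to the convex functions $f_1^2$ and $f_2^2$ at the reference points $\mathbf{x}_0$ and $\mathbf{y}_0$: $f_1^2(\mathbf{x}) \geq f_1^2(\mathbf{x}_0) + f_1'(\mathbf{x}_0)(\mathbf{x} - \mathbf{x}_0)$ and $f_2^2(\mathbf{y}) \geq f_2^2(\mathbf{y}_0) + f_2'(\mathbf{y}_0)(\mathbf{y} - \mathbf{y}_0)$ (with the understanding that $f_i'$ denotes the gradient of $f_i^2$, or of $f_i$ up to the chain-rule factor — a minor notational point I would make precise). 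Subtracting these lower bounds from the exact expression $f(\mathbf{x},\mathbf{y}) = \tfrac{1}{2}(f_1(\mathbf{x}) + f_2(\mathbf{y}))^2 - \tfrac{1}{2}(f_1^2(\mathbf{x}) + f_2^2(\mathbf{y}))$ and noting we are subtracting something \emph{smaller}, we get $f(\mathbf{x},\mathbf{y}) \leq \tfrac{1}{2}(f_1(\mathbf{x}) + f_2(\mathbf{y}))^2 - \tfrac{1}{2}\big(f_1^2(\mathbf{x}_0) + f_1'(\mathbf{x}_0)(\mathbf{x}-\mathbf{x}_0) + f_2^2(\mathbf{y}_0) + f_2'(\mathbf{y}_0)(\mathbf{y}-\mathbf{y}_0)\big) = \bar{f}(\mathbf{x},\mathbf{y},\mathbf{x}_0,\mathbf{y}_0)$, and the right-hand side is convex in $(\mathbf{x},\mathbf{y})$ because the quadratic term is convex and what is subtracted is affine. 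Tightness at the reference point ($\bar f(\mathbf{x}_0,\mathbf{y}_0,\mathbf{x}_0,\mathbf{y}_0) = f(\mathbf{x}_0,\mathbf{y}_0)$) follows since the Taylor expansions are exact there.

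**I do not expect any real obstacle here** — the result is a standard SCA trick and the proof is essentially two short computations plus bookkeeping of convexity. The only mild subtlety worth being careful about is the differentiability assumption: the statement writes $f_1'(\mathbf{x}_0)$ and $f_2'(\mathbf{y}_0)$, so I would note that $f_1,f_2$ are assumed differentiable (or at least that subgradients of $f_1^2, f_2^2$ exist), and clarify whether $f_i'$ means $\nabla f_i$ or $\nabla(f_i^2) = 2 f_i \nabla f_i$; the inequality holds with whichever convention makes the Taylor bound for $f_i^2$ correct. With that clarified, the chain of inequalities goes through verbatim.
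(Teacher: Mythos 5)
Your proposal is correct and follows essentially the same route the paper intends: verify the algebraic identity $\tfrac{1}{2}(a+b)^2-\tfrac{1}{2}(a^2+b^2)=ab$, note that non-negativity and convexity of $f_1,f_2$ make both terms convex, and then lower-bound the subtracted convex part by its first-order Taylor expansion to obtain the convex upper bound (the paper gives no separate proof, as this is the standard SCA argument recalled in the lemma statement itself). Your remark that $f_i'$ must be read as the gradient of $f_i^2$ (i.e., $2f_i\nabla f_i$) for the linearization to be a valid lower bound is a fair and worthwhile clarification of the paper's notation.
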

	
	Based on \textbf{Lemma} \textbf{\ref{lemma2}}, the original objective function can be approximated by a tractable form at any feasible point. Specifically, define ${\tilde \rho}^{(r)}$ and $\omega^{(r)}$ as the given download ratio and bandwidth allocation factor in the $r$-th iteration. Therefore, for two convex functions $f_1(\rho)=\rho$, $f_2(\omega)=\frac{1}{\omega\log_2(1+\frac{c_2}{\omega})}$ and local points ${\tilde \rho}^{(r)}$ and $\omega^{(r)}$, a convex upper-bound for $f({\tilde  \rho},\omega)=\frac{\tilde \rho}{\omega\log_2(1+\frac{c_2}{\omega})}$ can be derived by
	\begin{equation}
		\resizebox{1\hsize}{!}{$	\begin{aligned}
				&	f({\tilde \rho},\omega) \\
				& \leq \frac{1}{2}\left[\left( {\tilde \rho}+f_2(\omega) \right)^2-({\tilde \rho}^{(r)}) ^2+2{\tilde \rho}^{(r)}({\tilde \rho}-{\tilde \rho}^{(r)}) \right]   - \frac{1}{2}\left[ {f_2^2\left( {{\omega ^{(r)}}} \right)} \right.\\
				& \left. { - 2f_2^3\left( {{\omega ^{(r)}}} \right)\left( {{{\log }_2}\left( {1 + \frac{c}{{{\omega ^{(r)}}}}} \right) + \frac{{{{\log }_2}\left( e \right)c}}{{{\omega ^{(r)}} + c}}} \right)\left( {\omega  - {\omega ^{(r)}}} \right)} \right]\\
				&\triangleq\bar{f}({\tilde \rho},\omega,{\tilde \rho}^{(r)},\omega^{(r)}).
			\end{aligned} $}
	\end{equation}
	
	Therefore, ${\mathcal{P}}{\rm{5\text{-}2}}$ can be approximated as the following problem:
	\begin{subequations}
		\begin{equation}
			{\mathcal{P}}{\rm{5\text{-}3}}:	\mathop {\min }\limits_{\tilde{\boldsymbol{\rho}},\boldsymbol{\omega}}\;\; \frac{{{{\tilde \rho} _f}\left( {s_k,s_j} \right){b_f}{R_p}}}{{{C}\left( {s_k,s_j} \right)}}+
			b_f R_p \bar{f}({\tilde \rho},\omega,{\tilde \rho}^{(r)},\omega^{(r)})
		\end{equation}
		\begin{equation}
			{\rm s.t.}  \quad {{\tilde \rho} _f}\left( {s_k,s_j} \right) \le {{\tilde x}_f}\left( {s_k,s_j} \right), \; \forall  s_j,s_k \in {\mathcal S},
		\end{equation}
		\begin{equation}
			{\tilde \rho_f}\left( {s_k,s_j} \right) \in [0,1], \; \forall s_j,s_k \in {\mathcal S},
		\end{equation}
		\begin{equation}
			\sum\limits_{s_k^t \in\mathcal{S}}\omega\left( {{g_m},{s_k^t}} \right)\leq  1, \; \forall g_m \in \mathcal{G}.
		\end{equation}
	\end{subequations}

	\begin{algorithm}[!t]
		\caption{Proposed alternating optimization for solving $\mathcal{P}$5}
		\begin{algorithmic}[1]\label{alg2}
			\State {Initialize iteration number $r=0$, bandwidth allocation $\bm{\omega}^{(0)}$, and convergence threshold $\xi$.}
			\State \quad {\bf repeat}   
			\State \quad\quad  Solve ${\mathcal{P}}{\rm{5\text{-}1}}$ by EPM.
			\State \quad\quad  Solve  ${\mathcal{P}}{\rm{5\text{-}3}}$ until convergence.
			\State \quad\quad  Update the penalty factor.
			\State \quad\quad $r=r+1$.
			\State \quad {\bf until} The decrease in the objective function of $\mathcal{P}$5 is less than $ \xi$.
		\end{algorithmic}
	\end{algorithm}
	
	Since the objective function is convex and all the constraints are affine, problem ${\mathcal{P}}{\rm{5\text{-}3}}$ is a convex optimization problem. The procedure for solving $\mathcal{P}5$ is summarized in Algorithm \ref{alg2}.

	\subsection{Complexity Analysis}
	In each iteration of Algorithm \ref{alg2}, solving the convex problems ${\mathcal{P}}{\rm{5\text{-}1}}$ and ${\mathcal{P}}{\rm{5\text{-}3}}$ exhibit computational complexity that is merely polynomial in terms of the number of variables and constraints. We also consider the worse case, where all the aircraft have content requests and their associated satellite has $N_{\rm{ISL,\max}}$ visible neighboring satellites. 
	In this case, for ${\mathcal{P}}{\rm{5\text{-}1}}$, there is a linear objective, $AN_{\rm{ISL,\max}}$ real-valued variables, and $A+AN_{\rm{ISL,\max}}$ linear constraints. Thus, the computational complexity of ${\mathcal{P}}{\rm{5\text{-}1}}$ is upper bounded $\mathcal{O}\left( {\left( {A + 2A{N_{{\text{ISL}},\max }}} \right){{\left( {A{N_{{\text{ISL}},\max }}} \right)}^2}\sqrt {A + A{N_{{\text{ISL}},\max }}} } \right)$. For ${\mathcal{P}}{\rm{5\text{-}3}}$, the number of real-valued variables is at most $A+AN_{\rm{ISL,\max}}$, and the number of linear constraints is at most $A+2AN_{\rm{ISL,\max}}$. Thus, the computational complexity of ${\mathcal{P}}{\rm{5\text{-}3}}$ is upper bounded by $ \mathcal{O}\left( {\left( {2A + 3A{N_{{\text{ISL}},\max }}} \right){{\left( {A{N_{{\text{ISL}},\max }}} \right)}^2}\sqrt {A + 2A{N_{{\text{ISL}},\max }}} } \right) $. In addition, the complexity of updating the penalty factor is upper bounded by $\mathcal{O}\left(1\right)$.

	\section{Simulation Results}\label{sec_sr}
	
	This section provides numerical results to validate the effectiveness of our proposed transmission schemes for cached and non-cached files. 
	
	\subsection{Simulation Setups}
	We consider a whole day which is divided into multiple TSs with 15 seconds. A total of 120 satellites are distributed across 6 orbital planes, with each plane containing 20 satellites at an altitude of 1000 km, and the orbit inclination is set at 53$^{\circ}$. We consider Airbus A320 aircraft, each of which can generate at most one file request in a TS. Following the service configurations outlined in \cite{9810267}, we take into account four distinct types of files: music, image, video (these three types can be proactively cached in the satellite network), and instantaneous data stream (which can only be derived from the GSs). The number of packets follows a uniform distribution within the ranges of [50, 100], [500, 1000], [1000, 3000], and [10, 1000], respectively. Each packet contains 1080 bits. The default number of maximum ISLs per satellite and the number of GSs are 2 and 5, respectively. Other simulation parameters are given in Table \ref{sim_parameters} unless stated otherwise \cite{9810267,trends}.
	
	\begin{table}[ht]
		\caption{MAIN SIMULATION PARAMETER VALUES}\label{sim_parameters}
		\vspace{-15pt}
		\begin{center}
			\begin{tabular}{|m{4cm}<{\centering}|c|}
				\hline
				Parameter      & Value  \\ \hline
				Center frequency of S2A, G2A, and G2S links & 15 GHz, 18 GHz, 30 GHz\\ \hline
				Bandwidth of S2A G2A, and G2S links & 100 MHz\\ \hline
				Antenna diameter of satellite (S2A link), aircraft and GS & 1 m, 0.5 m, 2 m \\  \hline
				Corresponding antenna gain of satellite, aircraft and GS & 40 dB, 30 dB, 52 dB \\  \hline
				Transmit power of satellite and GS & 5 W, 10 W \\ \hline
				Carrier frequency, bandwidth and antenna gain of laser ISL & 193 THz, 50 MHz, 90 dB \\ \hline
				Additional loss in the space and air & 5.2 dB, 2.5 dB \\ \hline
				EPM-related parameters and convergence threshold $\epsilon$, $\Delta$, $\xi$ & $10^{-4}$, 5, $10^{-5}$ \\ \hline
			\end{tabular}
		\end{center}
	\end{table}
	
	\subsection{Performance Comparison}
	
	First, to evaluate the performance of the proposed EPM for the cached files, we choose the following benchmarks: 
	\begin{enumerate}
		\item \textit{Exhaustive search-based method}: In this case, the optimal solution is obtained by searching over all possibilities of satellite association in $\bf X$. 
		
		\item \textit{Greedy-based satellite association}: In this case, for all satellites that receive file requests, the ISL with the maximum capacity is consecutively selected until the maximum number of ISLs is reached.
		
		\item \textit{Random satellite association}: In this case, the connection status of all satellites are randomly generated. 
	\end{enumerate}

	Fig. \ref{fig:convergence} shows the convergence of the proposed algorithms. We can find that the proposed algorithms can achieve convergence within a limited number of iterations. Then, in Fig. \ref{cached_vs_task}, we plot the average delay of each file versus the average number of (cached) file requests per TS. Several key observations can be made as follows. Firstly, the performance of  Algorithm \ref{alg1} approximates that of the exhaustive search-based scheme, which verifies the effectiveness of the proposed EPM. Secondly, when the number of file requests is relatively small (i.e., less than 20), the greedy-based satellite association scheme achieves close performance as the optimal solution, while the performance gap increases as the requests grow intense. This phenomenon is consistent with the analysis in Remark \ref{remark2}. Thirdly, the random-based scheme maintains a high delay and thus is not an appropriate choice for satellite association.

    	\begin{figure}[h]
		\centering
		\includegraphics[width = 0.45\textwidth]{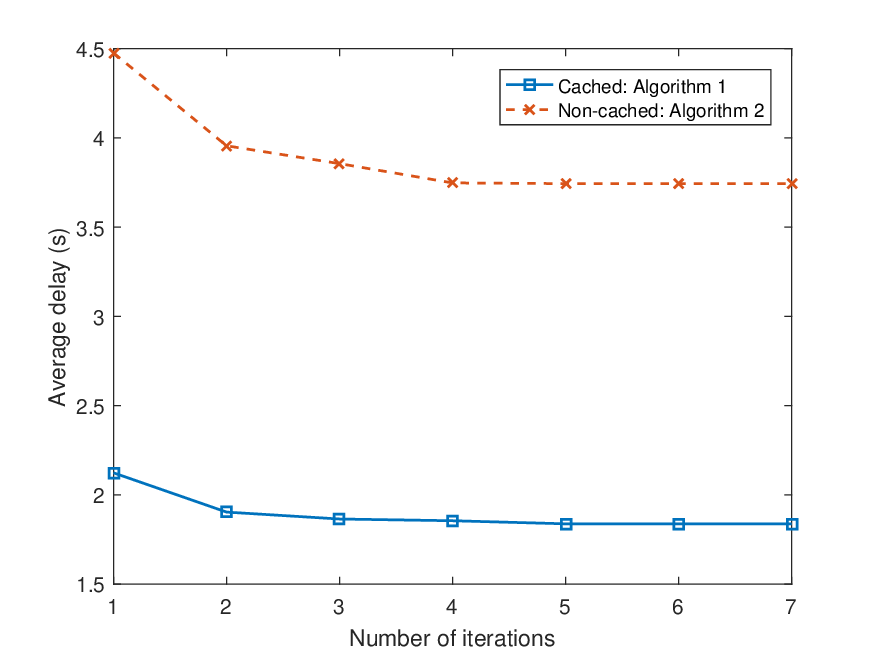}
		\caption{Convergence of the proposed Algorithm 1 and 2.}\label{fig:convergence}
	\end{figure}
    
	\begin{figure}[h]
		\centering
		\includegraphics[width = 0.45\textwidth]{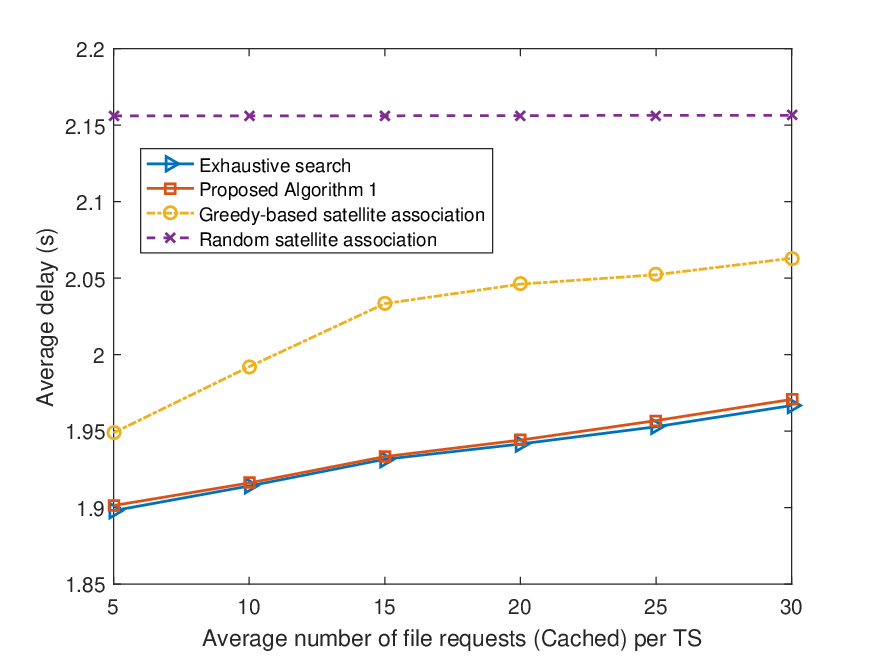}
		\caption{Average delay versus the average number of (Cached) file requests per TS.}\label{cached_vs_task}
	\end{figure}

	Then, for the non-cached files, the following benchmark schemes are chosen: 
	\begin{enumerate}
		\item \textit{Exhaustive search-based method}: In this case, $\cal P$5 is optimally solved by invoking the branch and bound method, whose performance can serve as a lower-bound for the average delay. 
		\item \textit{Equal bandwidth allocation}. This scheme equally allocates the total bandwidth among all GSs.
		\item \textit{Rounding-based satellite association}: In this case, we first relax the variable $x_f(s_k,s_j)$ into the range of $[0,1]$; Then, we solve $\cal P$5-1 with the relaxed $x_f(s_k,s_j)$ and round it into $\{0,1\}$  to obtain the satellite association; Finally, the bandwidth allocation is optimized. 
	\end{enumerate}

	Fig. \ref{noncached_vs_task} shows the average delay versus the number of (non-cached) file requests per TS. Our proposed Algorithm \ref{alg2} outperforms the equal bandwidth allocation and rounding-based association schemes while achieving close performance as the exhaustive search-based scheme. This validates the importance of a comprehensive satellite association and bandwidth allocation design in the considered IFC-oriented SAGIN.

    \begin{figure}[h]
		\centering
		\includegraphics[width = 0.45\textwidth]{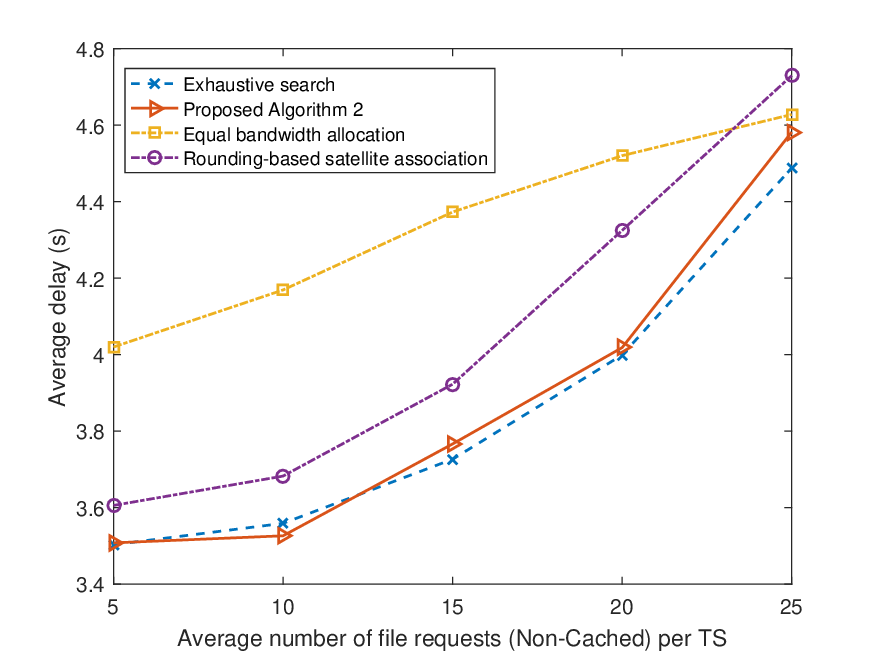}
		\caption{Average delay versus the average number of (Non-cached) file requests per TS.}\label{noncached_vs_task}
	\end{figure}

	\subsection{Impact of Several Key Parameters}
	\subsubsection{Impact of the maximum number of ISLs} Fig. \ref{vs_ISL} depicts the average delay versus the maximum number of realizable ISLs. The delay when neglecting the constraint of ISLs (i.e., fully connected satellite network) is also illustrated. For both cached and non-cached files, the average delay for downing the requested files decreases dramatically with the maximum number of ISLs. This is attributed to the high capacity provided by the laser ISLs, which significantly reduces the transmission delay within the satellite networks. Compared with permanent and single-path ISLs, we can conclude that dynamic and multi-path ISLs will play a vital role in achieving low-latency content delivery for next-generation wireless satellite networks.
	Last but not least, the delay will not monotonically decrease with the number of maximum ISLs as it converges to that of the fully connected scheme. This is because the number of visible satellites that have cached the request files is limited. Note that the trade-off between the cost of realizing more ISLs and the performance gain it brings is still an open and important issue.

    	\begin{figure}[h]
		\centering
		\includegraphics[width = 0.45\textwidth]{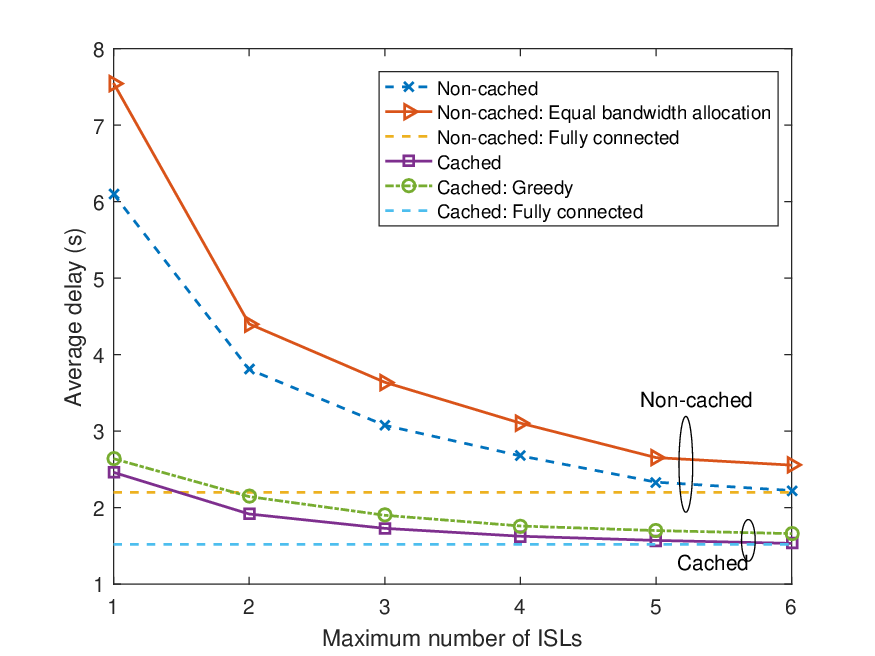}
		\caption{Average delay versus the maximum number of realizable ISLs.}\label{vs_ISL}  
	\end{figure}
	
	\subsubsection{Impact of the number of GSs} Fig. \ref{vs_GS} illustrates the average delay versus the number of GSs. The average delay decreases with the number of GSs for both types of files. However, the delay reduction is prominent for the non-cached files, while the decrease in delay with the bandwidth is relatively slight for the cached files. This is because the capacity of G2S links generally serves as a bottleneck for the non-cached files. In contrast, satellite GSs only act as a supplementary means for content delivery for the cached files. Moreover, the file delivery delay of non-cached files is more significant than that of the cached files since they require a multi-hop transmission. This result verifies the importance of deploying more gateways/GSs, especially for the non-cached files.
	
	\begin{figure}[h]
		\centering
		\includegraphics[width = 0.45\textwidth]{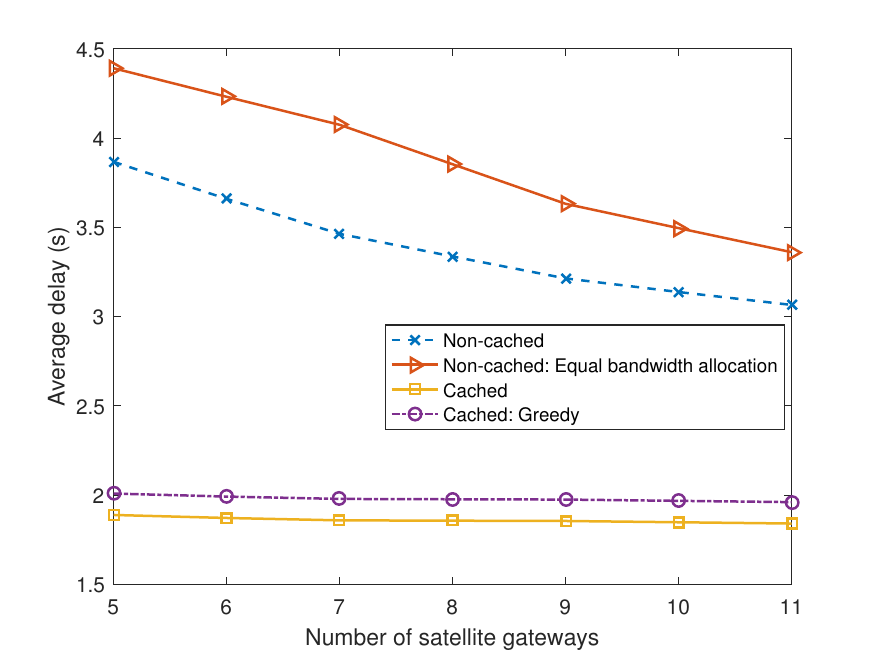}
		\caption{Average delay versus the number of GSs.}\label{vs_GS}
	\end{figure}
    
	\subsubsection{Impact of the bandwidth of satellite gateways and the altitudes of LEOs} Finally, we plot the average delay versus the total bandwidth of each satellite with different altitudes of LEOs in Fig. \ref{vs_band}. It can be observed that reducing the altitudes of the satellites can decrease the transmission delay of the non-cached files, while the decrease is slight for the cached files.

	\begin{figure}[h]
		\centering
		\includegraphics[width = 0.45\textwidth]{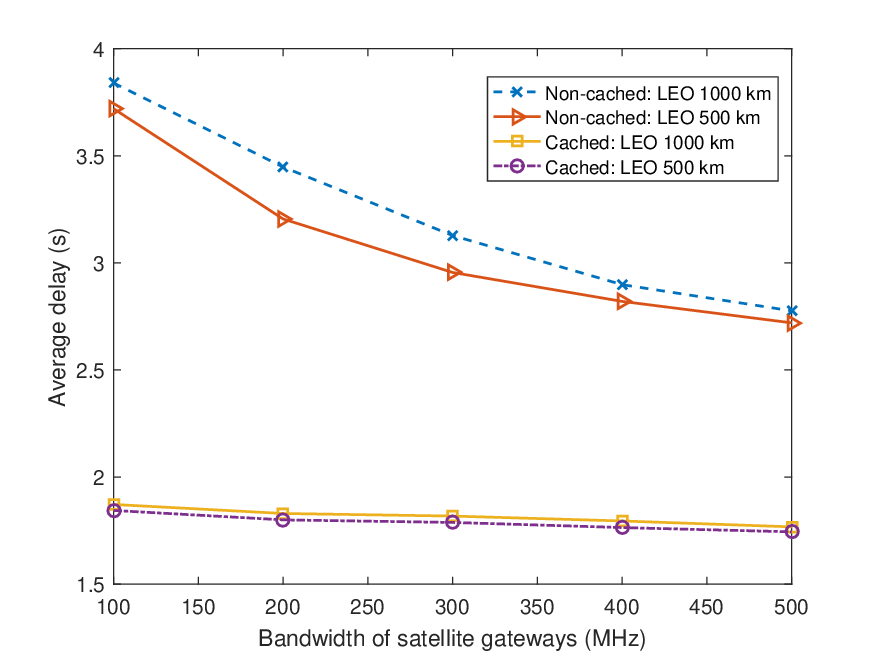}
		\caption{Average delay versus the bandwidth of GSs and the altitudes of LEO satellites.}\label{vs_band}  
	\end{figure}

	\section{Conclusion}\label{sec_conclusion}
	To improve the communication experience of airborne passengers, this article investigated the in-flight connectivity optimization problem in SAGIN, where satellites and GSs collaboratively deliver content to aircraft. By utilizing high-speed laser ISLs, we proposed a novel dual delivery framework in IFC-oriented SAGIN to minimize the average downloading delay and developed the corresponding optimization algorithms. The following insights were gained for practical IFC scheme design in the context of aviation IoT: First, the fluctuation of non-cached files' transmission latency is more sensitive than that of cached contents, as the communication resource of the terrestrial network serves as a bottleneck. Second, the maximum number of ISLs is crucial for reducing transmission latency, although the delay reduction converges as the number of ISLs increases.
	
	This article provided a preliminary attempt at optimizing content delivery in IFC-oriented SAGIN. 
	Several promising research directions remain for future exploration:
	\begin{itemize}
		\item Evaluate the performance of IFC schemes using comprehensive performance metrics for multimedia applications, including delay jitter and packet loss rate.
		
		\item Devise methods for mitigating the effects of misalignment fading when applying RISs on small satellites.
		
		\item 
		Design adequate frequency compensation to mitigate the effects of Doppler shifts, thereby ensuring the stability of data decoding and transmission. 
	\end{itemize}

\section*{Appendix A}	
First, we prove that $\mathbf{u}\in \{-1,1\}^{m\times n}$. According to the Cauchy-Schwarz inequality, we have 
\begin{equation}
	\langle \mathbf{u},\mathbf{v}\rangle=mn \leq \Vert\mathbf{u}\Vert_2 \cdot \Vert\mathbf{v}\Vert_2 \leq \sqrt{mn}\Vert\mathbf{u}\Vert_2.
\end{equation}
Then we have $\Vert\mathbf{u}\Vert_2^2\geq mn$. Considering that $-1\preceq {\bf u} \preceq 1$, $\Vert\mathbf{u}\Vert_2^2\leq mn$ also holds. Therefore, we have $\Vert\mathbf{u}\Vert_2^2= mn$ and $\mathbf{u}\in \{-1,1\}^{m\times n}$.

Next, we prove that $\mathbf{v}\in \{-1,1\}^{m\times n}$. The following inequality can be derived
\begin{equation}\label{theorem1_2}
	\begin{aligned}
		\langle \mathbf{u},\mathbf{v}\rangle=mn =&\sum_{i=1}^{m}\sum_{j=1}^{n}u_{ij}v_{ij}\leq
		\sum_{i=1}^{m}\sum_{j=1}^{n}|u_{ij}||v_{ij}| \\ \leq& \sum_{i=1}^{m}\sum_{j=1}^{n}|v_{ij}| \leq \sqrt{mn}\Vert\mathbf{u}\Vert_2. 
	\end{aligned}
\end{equation}
Thus, $\Vert\mathbf{v}\Vert_2^2\geq mn$. Combining that $\Vert\mathbf{v}\Vert_2^2\leq mn$, we have $\Vert\mathbf{v}\Vert_2^2= mn$. With the Squeeze theorem, all the equalities in \eqref{theorem1_2} hold automatically. Using the equality condition for the Cauchy-Schwarz inequality, we obtain $\mathbf{v}\in \{-1,1\}^{m\times n}$.

Given that $\mathbf{u},\mathbf{v}\in \{-1,1\}^{m\times n}$ and $\langle \mathbf{u},\mathbf{v}\rangle=mn$, we can obtain $\mathbf{u}=\mathbf{v}$. This complete the proof.

	\ifCLASSOPTIONcaptionsoff
	\newpage
	\fi

	\bibliographystyle{IEEEtran}
	\bibliography{IEEEabrv,reference}

	% that's all folks
\end{document}